\newtheorem{lemma}{Lemma}
\newtheorem{definition}{Definition}
\newtheorem{theorem}{Theorem}
\newtheorem{corollary}{Corollary}
\newcommand{\etal}{\textit{et al.}}
\newcommand{\prob}[1]{\Pr \left[ #1 \right]}
\newcommand{\ExpVal}[2]{\mathbb{E}_{#1}\left[ #2 \right]}
\newcommand{\miner}{\ensuremath{v}\xspace}
\newcommand{\nodes}{\ensuremath{n}\xspace}  
\newcommand{\setminers}{\ensuremath{\mathcal{V}}\xspace} 
\newcommand{\setminershon}{\ensuremath{\setminers_{\mathrm{hon}}}\xspace} 
\newcommand{\setminersadv}{\ensuremath{\setminers_{\mathrm{adv}}}\xspace}
\newcommand{\minerprime}{\ensuremath{\miner'}\xspace}
\newcommand{\absmp}{\ensuremath{\pi}\xspace} 
\newcommand{\zerominers}{zero mining-power nodes}
\newcommand{\abspowerhon}{\ensuremath{\pi_{\mathrm{hon}}}\xspace}  
\newcommand{\abspoweradv}{\ensuremath{\pi_{\mathrm{adv}}}\xspace}
\newcommand{\powerhon}{\ensuremath{\rho_{\mathrm{hon}}}\xspace}  
\newcommand{\poweradv}{\ensuremath{\rho_{\mathrm{adv}}}\xspace}  
\newcommand{\poweradvmax}{\ensuremath{\poweradv^{max}}\xspace}
\newcommand{\delay}{\ensuremath{\delta}\xspace}  
\newcommand{\maxdelay}{\ensuremath{\Delta}\xspace}  
\newcommand{\avgdelay}{\ensuremath{\widehat{\delta}}\xspace}
\newcommand{\rate}{\ensuremath{\lambda}\xspace}  
\newcommand{\RVabs}{\ensuremath{X}\xspace} 
\newcommand{\RVrel}{\ensuremath{R}\xspace} 
\newcommand{\RVreladv}{\ensuremath{\RVrel_{\mathrm{adv}}}\xspace} 
\newcommand{\EV}{\ensuremath{p^*}\xspace}
\newcommand{\diameter}{\ensuremath{D}\xspace}
\newcommand{\outdegree}{\ensuremath{d_{out}}\xspace}
\newcommand{\crprobunf}{\ensuremath{p}\xspace}
\newcommand{\charac}{\ensuremath{C}\xspace}
\newcommand{\charprobunf}{\ensuremath{q}\xspace}
\newcommand{\Txtminers}{Validators\xspace}
\newcommand{\txtminers}{validators\xspace}
\newcommand{\txtminer}{validator\xspace}
\newcommand{\txtnodes}{nodes\xspace}
\newcommand{\txtnode}{node\xspace}
\newcommand{\txtzp}{zero-power nodes\xspace}
\newcommand{\numminers}{\ensuremath{n_{\mathrm{val}}}\xspace}
\newcommand{\numzp}{\ensuremath{n_{\mathrm{ZP}}}\xspace}
\newcommand{\numnodes}{\ensuremath{n}\xspace}
\newcommand{\secevent}{\ensuremath{\mathds{E}_{\mathrm{sec}}}\xspace}
\newcommand{\corrupted}{corrupted\xspace} 
\newcommand{\corrupt}{corrupt\xspace} 
\newcommand{\corruption}{corruption\xspace} 
\newcommand{\constDelay}{\ensuremath{\delay_{\mathrm{max}}}\xspace}
\newcommand{\varMainTheorem}{\ensuremath{\Gamma}\xspace} %
\renewcommand{\paragraph}[1]{\par\medskip\noindent\textbf{#1.}\hspace{1ex minus .1ex}}
\begin{document}

\title{Larger-scale Nakamoto-style Blockchains Don't Necessarily Offer Better Security}

\author{
	\IEEEauthorblockN{Jannik Albrecht$^1$, Sebastien Andreina$^2$, Frederik Armknecht$^3$, \\Ghassan Karame$^1$, Giorgia Marson$^2$, Julian Willingmann$^1$}
	\IEEEauthorblockA{
		\textit{$^1$Ruhr University Bochum}, \textit{$^2$NEC Labs Europe, Germany}, 
		\textit{$^3$University of Mannheim Germany}\\
		$^1$\{jannik.albrecht, ghassan.karame, julian.willingmann\}@ruhr-uni-bochum.de\\
		$^2$\{sebastien.andreina, giorgia.marson\}@neclab.eu, 
		$^3$armknecht@uni-mannheim.de
	}
}

\maketitle

\begin{abstract}

Extensive research on Nakamoto-style consensus protocols has shown that network delays degrade the security of these protocols. Established results indicate that, perhaps surprisingly, maximal security is achieved when the network is as small as two nodes due to increased delays in larger networks. This contradicts the very foundation of blockchains, namely that decentralization improves security.

In this paper, we take a closer look at how the network scale affects security of Nakamoto-style blockchains. We argue that a crucial aspect has been neglected in existing security models: the larger the network, the harder it is for an attacker to control a significant amount of power.
To this end, we introduce a probabilistic corruption model to express the increasing difficulty for an attacker to corrupt resources in larger networks. Based on our model, we analyze the impact of the number of nodes on the (maximum) network delay and the fraction of adversarial power. In particular, we show that (1) increasing the number of nodes eventually violates security, but (2) relying on a small number of nodes does not provide decent security provisions either.
We then validate our analysis by means of an empirical evaluation emulating hundreds of thousands of nodes in deployments such as Bitcoin, Monero, Cardano, and Ethereum Classic. Based on our empirical analysis, we concretely analyze the impact of various real-world parameters and configurations on the consistency bounds in existing deployments and on the adversarial power that can be tolerated while providing security. As far as we are aware, this is the first work that analytically and empirically explores the real-world tradeoffs achieved by current popular Nakamoto-style deployments.
\end{abstract}

\section{Introduction}

Over the past decade,  blockchain platforms proliferated and their market shares have grown immensely~\cite{a16zReport}.
Beyond 
enabling the realization of decentralized Internet-scale cryptocurrencies, blockchains provide transparency, strong security guarantees, and enable open access and participation.
It is indeed expected that the blockchain technology will stimulate 
innovation and will positively impact the digital experience of many enterprises around the globe, e.g., in applications for payments, Central-Bank Digital Currency, insurances, crowd funding, or supply chains.

One of the main motivations behind  blockchain platforms is to encode business logic and realize services in a decentralized manner~\cite{Ethereum:dapps}.
The hope is to remove the need for trusting few central entities and hence to increase resilience.
Consequently, the security of blockchain platforms has received considerable scrutiny in the literature---with many attacks showing the limits of foresight in designing Nakamoto-style blockchains~\cite{DBLP:conf/uss/HeilmanKZG15,DBLP:conf/ccs/GervaisRKC15}.

Introduced with Bitcoin~\cite{nakamoto2009bitcoin} and later adopted by other block\-chains, Nakamoto-style consensus refers to a general design of permissionless consensus.
It typically consists of a probabilistic leader-election protocol, e.g., Proof of Work (PoW) or Proof of Stake (PoS), combined with a set of rules for resolving forks (e.g., the longest chain rule).
Intuitively, Nakamoto-style blockchains are designed to guarantee agreement on the order of all events as long as the majority of resources (computing power in PoW and cryptocurrency stake in PoS) is held by honest nodes. Honest majority alone, however, is not sufficient for security: Nakamoto-style consensus additionally requires that blocks are propagated among honest nodes within a certain time interval (known to the protocol). In other words, Nakamoto-style consensus requires a synchronous network.

Recently, a number of contributions~\cite{DBLP:conf/eurocrypt/GarayKL15, DBLP:journals/iacr/KifferRS22, DBLP:conf/icdcs/ZhaoTLWLX20, DBLP:journals/corr/abs-2203-06357, DBLP:conf/ccs/DemboKTTVWZ20} have formally investigated (and improved) the necessary and sufficient conditions on the honest-majority assumption under which the security of Nakamoto-style consensus can be achieved. These contributions have significantly advanced the community's understanding of proving security in existing blockchains and expressed how security depends on different parameters such as the stale block rate (i.e., the rate at which mined blocks are discarded due to a race condition), the maximum network delay, or the fraction of adversarial power. 
However, they only partially capture the initial idea behind decentralization as they do not address how the scale of the network (e.g., the number \nodes of nodes) impacts the security of the platform. 
More precisely, while it is folklore that an increase of \nodes results in an increase in the network delay, we do not know the extent to which security is affected by these delays. Naturally, higher network delays negatively impact the security of Nakamoto-style blockchains. On the other hand, enhancing decentralization intuitively yields an increase in security, as individual nodes control less relative power.  Unfortunately, existing security models commonly assume that the adversarial power is independent of \nodes,  
neglecting the positive impact of the network scale on the adversarial power.  
\emph{In other words, out of these two opposing forces only the former one has been well studied.} 
A further shortcoming is that existing results abstract away various peculiarities that impact real-world deployment models. For example, simply assuming a synchronous network (as is the case in existing analyses) neglects the fact that different blockchain deployments, e.g., Bitcoin and Ethereum Classic, rely on different gossip protocols, which exhibit drastically different robustness to adversarial attacks.

In this paper, our ultimate goal is to analytically and empirically understand the impact of  
{relevant} real-world parameters (such as the number of nodes in the network, the delays in the network, and the power controlled by the adversary) on the security of current Nakamoto-style deployments. To do so, we revisit and extend  
the state-of-the-art security model for blockchains~\cite{DBLP:conf/ccs/DemboKTTVWZ20} to investigate the impact of the number of nodes on the (maximum) network delay and the fraction of adversarial power. 
By leveraging this analysis and our aforementioned security model, we concretely analyze the impact of various real-world parameters on the consistency bounds and adversarial power that can be tolerated in existing {Nakamoto-style} deployments---while encompassing aspects such as the gossip layer and possible network-layer attacks. 
We summarize our contributions in this work as follows:

\begin{figure}
	\centering
	\includegraphics[width=\linewidth]{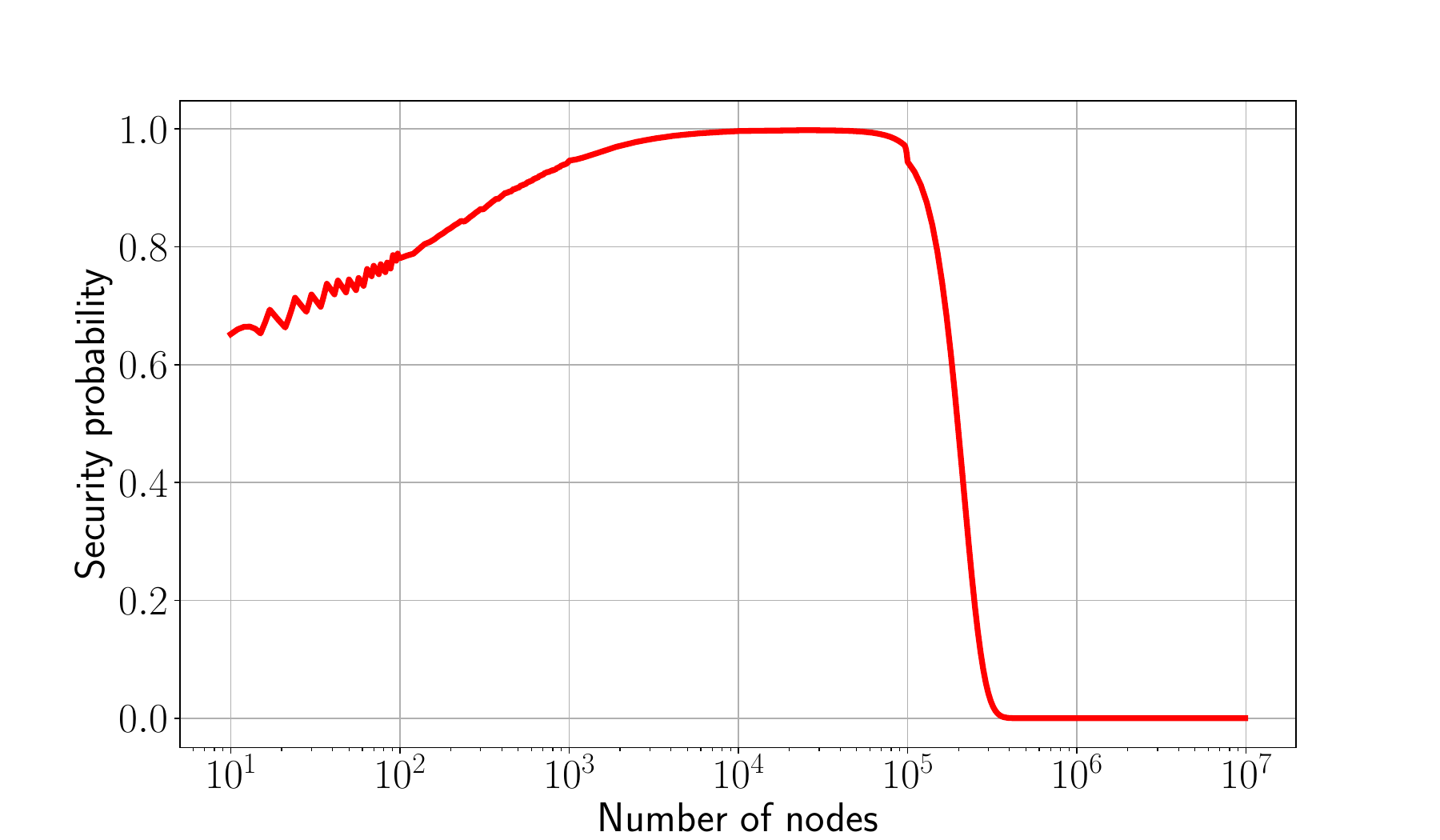}
	\caption{\textit{Security probability} vs number of nodes. We measure the probability (cf. \cref{eq:success}) that a Nakamoto-style blockchain provides persistence and liveness when faced with the probabilistic node corruption. Here, we consider a PoW-based blockchain that generates blocks at a rate of one block per 20 seconds (the case of Cardano). We assume that an adversary can delay selective blocks by up to 100 seconds, i.e., five block generations, (e.g., using~\cite{DBLP:conf/ccs/GervaisRKC15,DBLP:conf/uss/HeilmanKZG15}) and can corrupt each individual miner with probability $12.5\%$. Our results show that a network comprising of \num{20000} nodes is optimal in terms of security, but that a blockchain comprising of 10 nodes offers larger security compared to a network comprising of a million nodes.} 
	\label{fig:motivation}
\end{figure}

\begin{description}
	\item[Gap in existing security models] We show a gap in existing security models for Nakamoto-style blockchains (cf. Section~\ref{sec:modelsec}). Namely, existing models assume an attacker who is able to \corrupt a constant fraction of \txtminers. \emph{This, however, ignores the fact that the number of nodes affects the system's robustness against \corruption.}  We address this gap by revisiting and extending existing security models to capture this very notion that is intrinsic to decentralization.
	\smallskip\item[Theoretical analysis] We establish, for the first time, the relationship between the number of nodes populating a blockchain platform and the resulting adversarial power that can be tolerated by Nakamoto-style consensus (cf. Section~\ref{sec:decentralized}). Our analysis confirms the intuition that an increase of the number of nodes has positive (stronger resilience) and negative (longer delays) aspects. Moreover, it shows that for a small number of nodes, the positive aspects outweigh the negative ones, but only up to a certain threshold. Beyond this threshold, security starts to decrease again for an increasing number of nodes (see example in Figure~\ref{fig:motivation}). 
\smallskip\item[Large-scale evaluation] We validated our findings by means of experiments. Namely, we empirically evaluated the impact of the number of nodes, and other real-world factors, such as the gossip protocol, on the consistency achieved by  
{popular} Nakamoto-style blockchains,  
{namely} Bitcoin, Ethereum Classic, Monero, and Cardano.  
 We also evaluated the impact of  deployments of mining pools with dedicated high-speed network, as well as the impact of real-world network layer attacks on  
 practical deployments of these protocols
 and their respective gossip layers. 
 {The combined outcomes} of our theoretical and empirical analyses  
 {provide new insights towards understanding} the impact of 
 real-world parameters, such as the number of \txtminers, the numbers of \txtzp, and the underlying gossip protocols on the security of blockchain systems (cf.~\Cref{sec:takeaways}).
	For instance, we show that an adversary that controls as little as 28\% and 22\% of the computing power in Cardano and Ethereum Classic (respectively) could violate the consistency in the system when the network comprises a million nodes.
As far as we are aware, these are the first experiments that quantify the impact of large network scales, ranging to hundreds of thousands of nodes, on security. 
For reproducibility purposes, we provide the modified simulator open source\footnote{\url{https://github.com/RUB-InfSec/simblock}}.
\end{description}

\section{System and Security Model}\label{sec:modelsec}

We start by explaining our system model (\cref{sec:system-model}), and overview prior security models (\cref{sec:hma_tresholds}). Finally, we outline gaps in those models (\cref{sec:gaps}).  

\begin{table}	
	\centering
	\footnotesize
	\begin{tabular}{|ll|}
		\hline
		\numminers: 	& number of \txtminers \\
		\numzp: 		& number of \txtzp \\
		\numnodes: 		& number of \txtnodes (\numnodes := \numminers + \numzp) \\
		\poweradv: 		& fraction of adversarial power \\
		\powerhon: 		& fraction of honest power \\
		\rate: 			& block frequency \\
		\maxdelay: 		& upper bound on a blockchain's internal delays \\
		$e$: 			& magnification factor \\
		\hline
	\end{tabular}
	\caption{Summary of parameters used throughout this work.}
	\label{tab:notations}
\end{table}

\subsection{System Model}\label{sec:system-model}
A blockchain is a distributed data-structure, forming a sequence of blocks, where new blocks are appended at a regular pace. 
A \emph{blockchain platform} is composed of a blockchain and a set of participating \txtminers, who form a peer-to-peer (P2P) network and maintain the
blockchain. 
A peer-to-peer network can be modeled by a graph, whose nodes represent the different parties and whose edges represent direct connections. 
In the sequel, we use the term \emph{node} to refer to any of the parties.

The high-level goal of a blockchain platform is to consistently maintain a distributed database of totally-ordered transactions, the blockchain, which is replicated among the various parties and extended by appending new blocks. 
As discussed, we assume here the use of a Nakamoto-style consensus protocol and abstract away from the peculiarities of the underlying leader election protocol (e.g., PoW or PoS). Here, the protocol can be thought of as executing in
synchronous rounds, where each round corresponds to a sufficiently long (compared to the network delay) time interval. 
After retrieving new
blocks and transactions from the network, \txtminers verify the validity
of transactions as well as of possible alternative chains they
received. We remain deliberately vague about how transactions are verified since it is a protocol-specific detail that would otherwise restrict our treatment to specific platforms. Transactions that are deemed correct are included in their next block proposals. When alternative chains are formed, \txtminers then choose which chain they want to accept based on some pre-defined fork resolution protocol (e.g., the longest chain rule in Bitcoin).
Popular Nakamoto-style blockchains include Bitcoin,                                                                                                                                                       Monero, Ethereum Classic and Cardano.

The notion of security in a blockchain-based consensus protocol is 
tightly related to the process of appending blocks (as we explain in detail in \cref{sec:hma_tresholds}). 
Consequently, we follow existing work and explicitly distinguish between nodes that actively participate in the process of generating new blocks (\txtminers) and nodes that simply relay information and blocks (\txtzp). 
The latter are often referred to as \emph{full nodes}: they typically store a copy of the entire ledger and help in verifying transactions but are not involved in generating new blocks. 
We denote by \setminers the set of all \txtminers and refer to the number of \txtminers as $\numminers := \vert\setminers\vert$.
Similarly, we refer to the number of \txtzp as \numzp and the total number of \txtnodes as $\numnodes := \numminers + \numzp$. 
Occasionally, we assume some fixed ordering on \setminers  and simply use $i \in [\numminers]$ for the $i$-th \txtminer in \setminers. 

We denote by $\absmp$ each \txtminer's absolute power, e.g., the number of hashes per second for PoW blockchains or the amount of stake in PoS blockchains, and define $\absmp(\setminers'):=\sum_{\miner\in\setminers'}\absmp(\miner)$ for each (sub-)set $\setminers'$ of \txtminers. 
We assume that \txtminers are working independently. 
That is, if in practice several \txtminers join a pool, then in our model they are represented as one \txtminer. We summarize the various notations that we will use throughout this paper in Table~\ref{tab:notations}.

	\paragraph{Relative power \poweradv} A \txtminer can be either honest, if it follows the protocol specification, or be under adversarial control, i.e., being \corrupted. 
	We denote by \setminershon and \setminersadv the set of honest and adversarial \txtminers, respectively. 
	These sets are disjoint by definition, i.e., $\setminers=\setminershon \dot\cup \setminersadv$. 
	We denote the absolute power of the respective sets by $\abspowerhon=\absmp(\setminershon)$ and $\abspoweradv=\absmp(\setminersadv)$. 
	Similarly, we compute the relative power as follows:
	\begin{equation}\label{eq:rel_power_def}
		\powerhon=\frac{\abspowerhon}{\absmp(\setminers)}\quad\mathrm{and}\quad\poweradv=\frac{\abspoweradv}{\absmp(\setminers)}.
	\end{equation}
\paragraph{Block Generation Rate \rate} \Txtminers generate blocks with a frequency of \rate blocks per second. 
For example, $\rate = \frac{1}{600}$ in Bitcoin, with a new block being found roughly every ten minutes. 
If not stated otherwise, we assume that \rate is a fixed system parameter and in particular independent of \numminers.  We ground this assumption on the dynamic difficulty adjustment of popular Nakamoto-style blockchains, such as Bitcoin and Ethereum 1.0. 
For instance, Bitcoin regularly updates its difficulty to ensure that $\rate = \frac{1}{600}$ for the last 2016 generated blocks. Similarly, Ethereum 1.0 updates the difficulty to ensure that  $\rate = \frac{1}{13}$. 
\paragraph{Maximum Communication Delay \maxdelay}
Conforming with existing Nakamoto-style blockchains, \txtminers exchange messages using a peer-to-peer network. While in practice messages consist of blocks, transactions, and other meta-data, we focus mostly in our analysis on the propagation of blocks; these are typically bulky in size and require non-negligible time to be delivered.
We denote the delay between two communicating \txtminers $\miner_1, \miner_2 \in \setminers$ by $\delay(\miner_1, \miner_2)$, with $\delay(\miner,\miner)=0$ for every \txtminer $\miner\in\setminers$.
Here, the delay expresses how much it takes, in time units, for a block to be propagated from $\miner_1$ to $\miner_2$. In most cases, $\miner_1$ and $\miner_2$ are not communicating directly but use multi-hop-communication. That is, $\delay(\miner_1, \miner_2)$ scales linearly with the length of the shortest path connecting $\miner_1$ and $\miner_2$ in the network.
Adopting common notation, the maximum delay \maxdelay refers to the largest delay between any two nodes:
\begin{equation}
	\maxdelay:=\max_{\miner,\minerprime\in\setminers}\{\delay(\miner,\minerprime)\}.
\end{equation}

\subsection{Existing Security Models} \label{sec:hma_tresholds}
\newcommand{\hmafactor}{\ensuremath{\epsilon}\xspace}

\begin{table*}
	\scriptsize
	\centering
\scalebox{1.0}{\begin{tabular}{ lccl } 
	\toprule
	Authors & Condition & RB / CT & Comments \\ 
	\midrule
	Garay et al.~\cite{DBLP:conf/eurocrypt/GarayKL15} & $\powerhon > \poweradv$ & RB & Delivery of messages at the end of each round, i.e., $\delay_r = 0$.  \\ 
	Pass et al.~\cite{DBLP:conf/eurocrypt/PassSS17} & $\powerhon(1-(2\delay_r + 2))\powerhon) > \poweradv$ & RB & Delivery of messages within $\delay_r$ rounds. \\ 
 	Kiffer et al.~\cite{DBLP:journals/iacr/KifferRS22} &  Non-closed form (cf.~Zhao et al.~\cite{DBLP:conf/icdcs/ZhaoTLWLX20})
& RB & Delivery of messages within $\delay_r$ rounds; tighten bound from~\cite{DBLP:conf/eurocrypt/PassSS17} using Markov models.\\ 
 	Zhao et al.~\cite{DBLP:conf/icdcs/ZhaoTLWLX20} & $(1-p)^{2 \mu n \maxdelay} > \poweradv$ & RB &  Improved consistency bound from~\cite{DBLP:journals/iacr/KifferRS22} (simpler, closed-form expression).\\
 	Ling Ren~\cite{DBLP:journals/corr/abs-2203-06357} & $\powerhon e^{-2\powerhon\maxdelay} > \poweradv$ & CT & Block generation modeled as a Poisson process.\\ 
 	Dembo et al.~\cite{DBLP:conf/ccs/DemboKTTVWZ20} & $\frac{\powerhon}{1 + \powerhon\lambda\maxdelay} > e \cdot \poweradv$ & CT & Universal condition for all instances of Nakamoto-style blockchains. \\ 
	\bottomrule
\end{tabular}}
\caption{Honest-majority conditions for the security of Nakamoto-style consensus. RB and CT stand for \emph{round-based} and \emph{continuous-time} models respectively. All conditions assume \txtminers have equal power.}
\label{tab:conditions}
\end{table*}

Over the past decade, a series of contributions \cite{DBLP:conf/eurocrypt/GarayKL15, DBLP:journals/iacr/KifferRS22, DBLP:conf/icdcs/ZhaoTLWLX20, DBLP:journals/corr/abs-2203-06357, DBLP:conf/ccs/DemboKTTVWZ20} attempted to shed lights on the necessary and sufficient conditions on the majority assumption under which the security of a blockchain protocol can be achieved. 
Naturally, these conditions improved and got tighter over time. 
Before describing the main results of this work, we start by introducing established security notions for blockchain-based consensus protocols. 

\paragraph{Security Properties} Coining the security properties for blockchains has received considerable attention in the literature. Popular security properties consist of ensuring liveness and persistence.  We shortly summarize these and refer to \cite{DBLP:conf/eurocrypt/GarayKL15, DBLP:conf/ccs/DemboKTTVWZ20} for a formal definition.
\begin{description}
	\item[Persistence] guarantees that if a transaction is included ``deep enough'' in the blockchain of an honest \txtnode (i.e., it is confirmed), eventually it will be included in the local blockchain replicas of all honest \txtnodes, at the same blockchain height. 
	\smallskip\item[Liveness] ensures that every transaction received by honest \txtnodes will eventually be included in the blockchains of all honest \txtnodes as a confirmed transaction (i.e., 
sufficiently deep in the ledger). 
\end{description} 

\noindent \textbf{Existing security models.} We now summarize the main results from prior work. We focus on outlining the various reported conditions under which a blockchain protocol based on Nakamoto-consensus can be proven secure, and summarize the most significant conditions in Table~\ref{tab:conditions}.

Garay \emph{et al.}~\cite{DBLP:conf/eurocrypt/GarayKL15} show that Bitcoin can be proven secure if $\powerhon > \poweradv$. 
Here, \txtminers are assumed to mine in rounds, which we refer to in the following as a round-based model. All \txtminers follow a uniform distribution of hashing power. 
That is, each \txtminer has a fixed number of hash queries per round, i.e., a uniform hash rate. 

Pass \emph{et al.}~\cite{DBLP:conf/eurocrypt/PassSS17} later expanded this round-based model by considering non-zero message delays. 
This improved model uses a network delay bound $\delay_r$, which assumes that all messages are eventually delivered after at most $\delay_r$ rounds. Under those assumptions, the authors proved Bitcoin to be secure when $\powerhon(1-(2\delay_r + 2)\powerhon) > \poweradv$ is satisfied. Kiffer \emph{et al.}~\cite{DBLP:journals/iacr/KifferRS22} extended the security analysis of Pass \emph{et al.} using the Markov model, which resulted in more accurate bounds on consistency. 
Later, Zhao \emph{et al.}~\cite{DBLP:conf/icdcs/ZhaoTLWLX20} further reduced the complexity of the security condition in~\cite{DBLP:journals/iacr/KifferRS22}. 

Later on, Ren~\cite{DBLP:journals/corr/abs-2203-06357} extended these models to capture continuous-time security.
Here, \txtminers can generate and propagate blocks in parallel, where the event of a block being mined is modeled by a Poisson process. 
The results due to Ren~\cite{DBLP:journals/corr/abs-2203-06357} corroborate the results by Zhao et al.
Finally, Dembo et al.~\cite{DBLP:conf/ccs/DemboKTTVWZ20} extended the previous models by making them applicable to all instances of Nakamoto-style blockchains.  
The authors show that both persistence and liveness are guaranteed (up to a value being negligible in some security parameter) if and only if the following inequality holds: 
\begin{equation}\label{eq:CCS-bound_notes}
	e\cdot \poweradv < \frac{1-\poweradv}{1+(1-\poweradv)\rate\maxdelay}.
\end{equation}
Recall that $\poweradv$, $\rate$, and $\maxdelay$ refer to the fraction of adversarial power, the block generation rate, and the maximum communication delay, respectively (cf. \cref{sec:system-model}). 
Moreover, $e$ is a magnification factor that has been specifically introduced to capture the Nothing-at-Stake nature of the Chia Proof-of-Space protocol.

\subsection{Gaps in Existing Models}\label{sec:gaps}
While existing security models significantly boosted the understanding of the security of blockchains, several gaps can be identified. 

First and foremost, existing models provide only limited insight into the security of real-world blockchains. More precisely, apart from \rate, neither \maxdelay nor \poweradv are system parameters that can be tuned directly. Instead, these values depend on other parameters and design choices. So far, these connections have hardly been investigated. 
Second, they indicate that optimal security is given for small networks, e.g., blockchain platforms with $\nodes = 2$ nodes or $\nodes = 3$ nodes, as communication delays will be minimal in this case. 

\paragraph{Motivating Example}
For instance, consider a Nakamoto-style blockchain consisting of three nodes each controlling $\frac{1}{3}$ of the network's power (cf.\ Blockchain A in Figure~\ref{fig:counterexample}). 
Consider another Nakamoto-style blockchain comprising of six nodes each controlling $\frac{1}{6}$ of the total power (cf.\ Blockchain B in Figure~\ref{fig:counterexample}). 
In this example, it is clear that delays witnessed among the nodes in Blockchain B are larger than those witnessed in Blockchain A. Notice, here, that an adversary that controls two out of three nodes in Blockchain A, e.g., nodes $a$ and $b$, has the same relative power as an adversary that controls four out of six nodes, e.g., nodes $\{d, e, f, g\}$, in Blockchain B. 
However, since Blockchain B features a higher overall delay compared to Blockchain A, Blockchain A is deemed to provide better security according to existing models~\cite{DBLP:conf/ccs/DemboKTTVWZ20} (cf.~\cref{eq:CCS-bound_notes}).  
This is clearly not supported in practice: namely, an adversary typically is more likely to succeed in corrupting two nodes in Blockchain A rather than four nodes in Blockchain B. Intuitively, the more nodes a system comprises, the more robust it should be against compromise. Current security models unfortunately do not allow to cover this aspect.

\begin{figure}
	\centering
	\includegraphics[width=\linewidth]{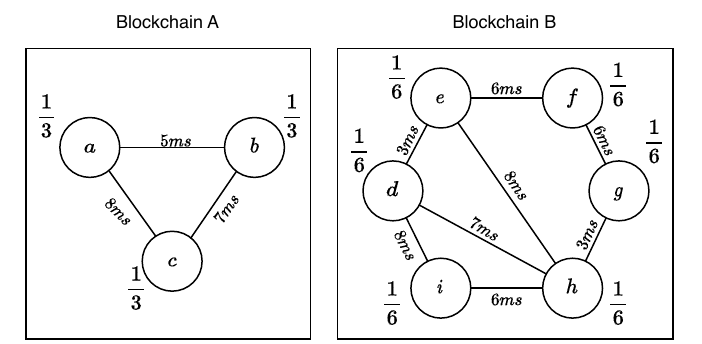}
	\caption{Motivating Example. Existing security models consider Blockchain A (left) to be more secure than Blockchain B (right).} 
	\label{fig:counterexample}
\end{figure}

\paragraph{Research Questions} The overarching goal of our work is 
{to understand the concrete impact of real-world parameters on the overall security of a blockchain platform. 
Towards this goal, we tackle the following research questions:}

\begin{description}[style=unboxed]
\item[RQ1: What is the precise relationship between security and the number of nodes?] The first and most obvious 
parameter that contributes to decentralization is the number of nodes \nodes  (recall that the central motivation behind blockchains is to securely realize functionalities in a decentralized manner). A larger \nodes improves security by making corruption {of a significant fraction of nodes harder, and at the same time it harms} security by causing higher delays. Our work aims to understand the precise effect of these two opposing forces on {security.} 

\smallskip\item[RQ2: Do the security and consistency bounds properly capture issues stemming from practical deployments?] 
{Security analyses of Nakamoto-style consensus protocols assume an upper bound \maxdelay on the communication delays in the network.}
{It remains unclear how these bounds can be established for safe deployments in practice.} Intuitively, 
the larger \nodes, the larger the peer-to-peer network, hence the larger the maximum communication delay \maxdelay. However, besides~\nodes, the underlying choice of the gossip protocol and the block size have a considerable influence on delays; different blockchains feature different gossip protocols and various block sizes. As opposed to \nodes, the choice of the gossip protocol and the block size are under the control of the system designers. Moreover, network attacks, such as~\cite{DBLP:conf/ccs/GervaisRKC15, DBLP:conf/uss/HeilmanKZG15}, have a detrimental impact on the delays witnessed by current blockchains. 
{Another major goal of our work is to assess how and to what extent such attacks affect the security Nakamoto-style consensus in real-world deployments.}

\end{description}

\section{Our Security Framework}\label{sec:decentralized}
In this section, we investigate the impact of decentralization, expressed by the number \nodes of nodes, on the security of Nakamoto-style blockchains. 
Our main result is \cref{th:main} in \cref{sec:OverallSecurity}, that provides a means to assess $\prob{\secevent[\numminers,\numzp]}$, where \secevent is the event that the blockchain is secure (cf. \cref{def:security-event} in \cref{sec:security-model}), \numminers is the number of \txtminers, and \numzp the number of \txtzp. We achieve this as follows:
\begin{enumerate}
	\item In \cref{sec:security-model}, we revisit the results from Dembo et al.~\cite{DBLP:conf/ccs/DemboKTTVWZ20} providing a necessary and sufficient condition for \secevent. This condition depends on two parameters: the maximum delay \maxdelay and the fraction of adversarial power \poweradv. In particular, it allows us to concentrate on $\secevent[\maxdelay,\poweradv]$.
\item To connect $\secevent[\maxdelay,\poweradv]$ to $(\numminers,\numzp)$, we describe afterwards how $(\maxdelay,\poweradv)$ depend on $(\numminers,\numzp)$, i.e., the number of \txtminers and \txtzp, respectively. The impact of \numnodes on the delay, i.e., $\maxdelay(\numnodes)$, is expressed in \cref{th:maxdelay}, which is actually a direct consequence of previous work. The analysis of the relation between the number \numminers of \txtminers and the fraction of adversarial power, i.e., $\poweradv(\numminers)$, is given in \cref{sec:corruption}. Note that for this analysis, we have to extend existing security models and introduce a new corruption model. Namely, existing security models assume an attacker, who is able to \corrupt a constant fraction of \txtminers. \emph{This, however, goes against the very notion of decentralization as it ignores the fact that an increase of \numminers could make the system more robust against \corruption.} Therefore, we instead consider a corruption model, where the success probability of an attacker to corrupt one \txtminer is independent of how many other \txtminers may have been corrupted so far.
\item Finally, in \cref{sec:OverallSecurity} we plug these results together to state in \cref{th:main} how \secevent depends on $(\numminers,\numzp)$. 
\end{enumerate}
Semi-formally, the roadmap in this Section is as follows:
\begin{eqnarray*}
	\secevent&\stackrel{Sec.~\ref{sec:security-model}}{\leftrightarrow}&\secevent[\maxdelay,\poweradv]\\
	&\stackrel{Thm.~\ref{th:maxdelay}\& Sec.~\ref{sec:corruption}}{\leftrightarrow}& \secevent[\maxdelay(\numminers+\numzp),\poweradv(\numminers)]\\
	&\stackrel{Sec.~\ref{sec:OverallSecurity}}{\leftrightarrow}& \secevent[\numminers, \numzp].
\end{eqnarray*}

\subsection{Security Model}\label{sec:security-model}
We start by reformulating the security condition showed by Dembo \emph{et al.}~\cite{DBLP:conf/ccs/DemboKTTVWZ20}. To simplify the discussion, we introduce the following notation to express the security of a Nakamoto-style blockchain:
\begin{definition}[Event of Security $\secevent$]\label{def:security-event}
	For a given blockchain, we denote by \secevent the event that the system provides persistence and liveness. If this does not hold, we write $\neg\secevent$. 
	Moreover, we say that the system is \textit{secure} if and only if \secevent holds.
	
	The term $\prob{\secevent}$ expresses the probability that \secevent is true. We extend this notation as follows: $\prob{\secevent[x_1,\ldots,x_\ell]}$ is likewise the probability that \secevent is true while certain parameters $x_1,\ldots,x_\ell$ of the blockchain are specified.
\end{definition}

Using this notation, we can reformulate the security condition shown in \cite{DBLP:conf/ccs/DemboKTTVWZ20} as follows:
\begin{theorem}\label{th:sec_condition}
Consider a Nakamoto-style blockchain with \numnodes \txtnodes, where all \txtminers possess the same amount of power. The system has a  block frequency \rate and a maximum communication delay of \maxdelay. Moreover, let $e$ be the magnification factor as explained in \cref{sec:hma_tresholds} and \poweradv be the fraction of adversarial power \poweradv.  Then, 
\begin{eqnarray}
	\secevent[\numminers,\numzp] &\Leftrightarrow&e\cdot \poweradv < \frac{1-\poweradv}{1+(1-\poweradv)\rate\maxdelay}\label{eq:conditionDembo}\\
	 &\Leftrightarrow& f(\poweradv)\cdot \rate\cdot \maxdelay < 1, \label{eq:condition}
\end{eqnarray}
where $f(\poweradv)=\left(\frac{e\cdot \poweradv\cdot (1-\poweradv)}{1-\poweradv\cdot (1+e)}\right).$
\end{theorem}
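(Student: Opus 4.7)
The plan is to treat the two biconditionals separately. The first, \eqref{eq:conditionDembo}, should follow essentially by citation: it is precisely the Dembo et al.\ consistency condition (last row of Table~\ref{tab:conditions}) after translating notation. Since \setminers is partitioned into \setminershon and \setminersadv, the relative powers defined in \eqref{eq:rel_power_def} satisfy $\powerhon + \poweradv = 1$, so substituting $\powerhon = 1 - \poweradv$ into $\powerhon/(1 + \powerhon \rate \maxdelay) > e \cdot \poweradv$ yields \eqref{eq:conditionDembo} verbatim. The parametrization $\secevent[\numminers,\numzp]$ only matters insofar as \maxdelay and \poweradv are themselves functions of $(\numminers,\numzp)$; that dependence is deferred to \cref{th:maxdelay} and \cref{sec:corruption}, so here the theorem is to be understood pointwise for fixed $(\maxdelay,\poweradv)$ arising from the given $(\numminers,\numzp)$.

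The second biconditional \eqref{eq:condition} is a straightforward algebraic rearrangement. Starting from \eqref{eq:conditionDembo}, I would multiply both sides by the strictly positive quantity $1 + (1-\poweradv)\rate\maxdelay$, expand, and collect the $\rate\maxdelay$ term on the left:
\begin{equation*}
e \cdot \poweradv \cdot (1-\poweradv) \cdot \rate \maxdelay \;<\; 1 - \poweradv(1+e).
\end{equation*}
Dividing by $1 - \poweradv(1+e)$ then produces $f(\poweradv)\,\rate\,\maxdelay < 1$ with $f$ as defined.

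The one subtle point, and the only real obstacle, is the sign of $1 - \poweradv(1+e)$: dividing by it preserves the inequality only when it is strictly positive, i.e., when $\poweradv < 1/(1+e)$. To close the argument I would handle the complementary regime $\poweradv \ge 1/(1+e)$ separately, observing that in this regime $e \cdot \poweradv \ge 1 - \poweradv \ge (1-\poweradv)/(1+(1-\poweradv)\rate\maxdelay)$, so \eqref{eq:conditionDembo} already fails and $\neg\secevent$ holds; simultaneously $f(\poweradv) \le 0$ in that regime, so the reformulation $f(\poweradv)\rate\maxdelay < 1$ must be read as the statement of security in the only meaningful range. A short remark making explicit that \eqref{eq:condition} is the equivalent reformulation on the domain where $f$ is positive (and that outside this domain both conditions coincide with the failure of \secevent) suffices to make the biconditional rigorous. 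The rest is routine bookkeeping of signs and constants.
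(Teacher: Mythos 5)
Your proposal matches the paper's treatment of this theorem: the first equivalence is imported from Dembo et al.\ as stated in \eqref{eq:CCS-bound_notes} (with $\powerhon=1-\poweradv$), and the second is obtained, in the paper's words, ``directly \ldots by rearranging,'' which is exactly your algebraic step. Your additional care about the sign of $1-\poweradv(1+e)$ is a point the paper glosses over (the restriction $\poweradv<\tfrac{1}{1+e}$ only surfaces implicitly as the domain of $f$ in \cref{lem:function}), and your handling of that regime is correct.
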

Recall that $\poweradv$ (cf. \cref{eq:rel_power_def}) takes into account the adversary's relative power (i.e., computing power or amount of stake) when generating new blocks, e.g., the mining difficulty for PoW blockchains.
Equation~(\ref{eq:condition}) in \cref{th:sec_condition} directly follows by rearranging Equation~(\ref{eq:conditionDembo}).
In this equation, we have three factors $f(\poweradv)$, $\rate$, and $\maxdelay$, and security is only guaranteed if and only if their product is below 1. Before we analyze the relation into more details, we first discuss the implications of the first factor, $f(\poweradv)$. 

\begin{lemma}\label{lem:function}
	The function $f(\poweradv)$ is strictly increasing in $\poweradv \in [0,\frac{1}{1+e})$. Moreover, it holds that $f(0)=0$ and $\lim_{\poweradv\rightarrow \frac{1}{1+e}}f(\poweradv)=\infty$.
\end{lemma}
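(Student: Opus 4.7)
The plan is to verify the three assertions in order: the boundary value, the limit, and strict monotonicity. The first two are essentially substitutions, so the main work (and only real obstacle) is to control the sign of $f'(\poweradv)$ throughout the interval.

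For $f(0)=0$ I would simply plug $\poweradv=0$ into the definition: the numerator $e\cdot \poweradv\cdot (1-\poweradv)$ vanishes while the denominator equals $1$. For the limit, as $\poweradv \to \tfrac{1}{1+e}^-$ the denominator $1-\poweradv(1+e)$ tends to $0^+$, while the numerator tends to the strictly positive value $e\cdot\tfrac{1}{1+e}\cdot\tfrac{e}{1+e}=\tfrac{e^2}{(1+e)^2}$; hence $f(\poweradv)\to +\infty$.

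For monotonicity I would apply the quotient rule to $f=N/D$ with $N(\poweradv)=e\poweradv(1-\poweradv)$ and $D(\poweradv)=1-(1+e)\poweradv$. Then $N'(\poweradv)=e(1-2\poweradv)$ and $D'(\poweradv)=-(1+e)$, and after expanding $N'D-ND'$ and collecting terms, I expect the numerator of $f'$ to reduce to
\begin{equation*}
  e\bigl[\,1 - 2\poweradv + (1+e)\poweradv^{2}\,\bigr].
\end{equation*}
The denominator $D(\poweradv)^2$ is strictly positive on $[0,\tfrac{1}{1+e})$, so the sign of $f'$ is the sign of this quadratic in $\poweradv$. To show the quadratic is strictly positive everywhere, I would compute its discriminant, $4-4(1+e)=-4e<0$. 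Since the leading coefficient $1+e$ is positive and there are no real roots, the quadratic is strictly positive on $\mathbb{R}$, hence $f'(\poweradv)>0$ on the whole interval, and $f$ is strictly increasing as claimed.

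The only delicate step is the algebraic simplification leading to the form $e[1-2\poweradv+(1+e)\poweradv^{2}]$: several cross-terms must cancel cleanly, and one should double-check them before invoking the discriminant argument. Once that identity is in place, the conclusion is immediate. Note that the argument uses only $e>0$, so it applies uniformly to any admissible value of the magnification factor introduced in \cref{sec:hma_tresholds}.
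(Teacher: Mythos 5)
Your proof is correct and follows essentially the same route as the paper's: both compute the same derivative, whose numerator reduces to $e\left((1+e)\poweradv^2-2\poweradv+1\right)$ over the positive square $\left(1-(1+e)\poweradv\right)^2$, and the boundary value and limit are handled by direct inspection. The only (minor) difference is in the positivity of the quadratic: you use the discriminant $4-4(1+e)=-4e<0$, which works for any $e>0$, whereas the paper bounds the numerator below by $(\poweradv-1)^2$ using $e\geq 1$; your variant is slightly more general but the overall argument is the same.
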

\begin{proof}
	Consider the derivative of $f(\poweradv)$:
	\begin{equation*}
		f'(\poweradv)=\frac{e\cdot \left( (e+1)\cdot \poweradv^2-2\poweradv+1\right)}{\left((e+1)\cdot \poweradv-1\right)^2}.
	\end{equation*}
	Since  $e\geq 1$, it holds that 
	\begin{eqnarray*}
		e\cdot \left( (e+1)\cdot \poweradv^2-2\poweradv+1\right)&>&\left(\poweradv^2-2\poweradv+1\right)\\
		&=&\left(\poweradv-1\right)^2\\
		&>&0	
	\end{eqnarray*}
	and $\left((e+1)\cdot \poweradv-1\right)^2>0$ as well. \
	Thus, $f'(\poweradv) > 0$ in the considered domain. 
	The second part of Lemma~\ref{lem:function} follows easily by inspecting $f(x)$.
\end{proof}

In consequence, we see a tradeoff between security ($f(\poweradv)$) and efficiency ($\rate\cdot \maxdelay$). Namely, an increase of $f(\poweradv)$ (which is equivalent to an increase of \poweradv as $f$ is strictly monotonically rising, cf. \cref{lem:function}) does not break security as long as $\maxdelay$ is sufficiently small and vice versa. In the sequel, we investigate how these two parameters depend on the number \numminers of \txtminers and \numzp of \txtzp.

With respect to the impact of the number of validators on \maxdelay, we can actually make use of the following known result based on ~{\cite[Theorem 6]{CHUNG2001257}} (see Appendix~\ref{ap:thmaxdelay} for more details).
\begin{theorem}[Maximum Delay]\label{th:maxdelay}
	For a blockchain with \numnodes \txtnodes, it holds that: 
	\begin{equation}
		\maxdelay(\nodes) \in\Theta(\log(\nodes)). \label{eq:maxdelay}
	\end{equation}
\end{theorem}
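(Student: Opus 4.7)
The plan is to reduce the theorem to a known result on the diameter of the peer-to-peer graph underlying the blockchain. Recall from \cref{sec:system-model} that the delay $\delay(\miner_1,\miner_2)$ between two \txtminers scales linearly with the length of the shortest path connecting them in the P2P network; hence, up to a constant per-hop latency, the maximum delay $\maxdelay$ equals (up to constants) the diameter $\diameter$ of the network graph. The task therefore reduces to showing that, under a realistic model of the P2P overlay on \nodes \txtnodes, $\diameter(\nodes)\in\Theta(\log\nodes)$.

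First, I would make explicit the modeling assumption already implicit in the blockchain deployments considered: every \txtnode maintains a bounded number of outgoing connections to peers chosen (approximately) at random among the other \txtnodes, yielding a random graph of bounded average degree (or, more generally, with a well-behaved expected degree sequence). Under this assumption the overlay falls within the class of random graphs analyzed by Chung and Lu, so I can directly invoke~\cite[Theorem~6]{CHUNG2001257}, which asserts that for such random graphs the diameter concentrates at $\Theta(\log \nodes)$ with high probability. Combining this with the fact that per-hop latencies are bounded above by a constant (determined by block size, bandwidth, and verification time) and below by a positive constant (the physical round-trip latency), one obtains both the $O(\log\nodes)$ upper bound and the $\Omega(\log\nodes)$ lower bound on $\maxdelay$, yielding~\eqref{eq:maxdelay}.

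For the lower bound, the argument is that in any graph of maximum degree $d$ the number of \txtnodes within $k$ hops of a fixed \txtnode is at most $1 + d + d^2 + \dots + d^k = \Theta(d^k)$; for this quantity to reach \nodes one needs $k \geq \log_d(\nodes)$, so $\diameter \geq \Omega(\log\nodes)$ whenever $d$ is bounded. The upper bound is the nontrivial direction and is exactly what Chung–Lu deliver for random graphs with bounded expected degrees.

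The main obstacle is not the mathematics itself (since Chung–Lu does the heavy lifting) but the modeling step: justifying that the actual P2P overlays used by Bitcoin, Monero, Cardano, and Ethereum Classic are well-captured by the random-graph class analyzed in~\cite{CHUNG2001257}. I would therefore defer the detailed discussion and verification of this modeling choice to Appendix~\ref{ap:thmaxdelay} (as the statement already signals), where one can argue that the peer-selection strategies of these deployments induce random regular-like topologies with bounded average degree independent of \nodes, so that the hypotheses of~\cite[Theorem~6]{CHUNG2001257} are satisfied and \eqref{eq:maxdelay} follows.
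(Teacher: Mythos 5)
Your proposal matches the paper's own argument: the paper likewise models the overlay as a random graph in which each node opens a bounded number $\outdegree$ of random connections (so $\nodes\cdot p\approx\outdegree$ is a constant $>1$), invokes \cite[Theorem~6]{CHUNG2001257} to get $\diameter\in\Theta(\log\nodes)$, and multiplies by a constant per-hop delay to conclude $\maxdelay\in\Theta(\log\nodes)$. The only cosmetic difference is that you derive the $\Omega(\log\nodes)$ side from an elementary bounded-degree ball-growth count, whereas the paper reads both bounds directly off the Chung--Lu theorem; this does not change the substance of the proof.
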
A consequence is that $\maxdelay(\nodes)$ can be approximated by $a\cdot \log(\nodes)+b$ for some constants $a,b$.
In particular, $\maxdelay(\nodes)$ increases with if the number \nodes increases.

\subsection{Corruption Model} \label{sec:corruption}
We now investigate how the second parameter from \cref{th:sec_condition}, namely the relative adversarial power \poweradv, depends on the number \numnodes of nodes. More precisely, we consider the number \numminers of \txtminers only, as \txtzp do not possess any mining power and hence no \corrupted \txtzp does  influence \poweradv. 
In what follows, we first explain the considered \corruption model and derive bounds on \poweradv afterwards in \cref{th:bounding-poweradv}.

In general, an attacker can \corrupt \txtminers to control their resources (cf. \cref{sec:system-model}). 
Contrary to previous models \cite{DBLP:conf/eurocrypt/GarayKL15, DBLP:conf/eurocrypt/PassSS17, DBLP:journals/iacr/KifferRS22, DBLP:conf/icdcs/ZhaoTLWLX20, DBLP:journals/corr/abs-2203-06357, DBLP:conf/ccs/DemboKTTVWZ20}, we do not impose an upper bound on the number of \corrupted nodes. Instead, 
we assume that each \txtminer can be probabilistically \corrupted independently of the other \txtminers, in line with Gentry \etal~\cite{DBLP:conf/crypto/GentryHKMNRY21}. Moreover, to ensure that our model generalized to realistic scenarios, we also assume that the corruption probability can differ among different validators. That is, our proposed corruption model allows to capture situations where the validators with a huge amount of resources are better protected and hence more resilient against corruption and vice versa. This generalization, however, imposes another question: as we consider an increasing number of validators, we need to decide the corruption probability for each validator that newly joins the blockchain. Our approach here is to divide the set of validators into a number of different types, let's say $\ell$. Each type $i$ has a different probability $\crprobunf^{(i)}$ of being corrupted. Moreover, there are probabilities $\charprobunf^{(1)},\ldots,\charprobunf^{(\ell)}$ where $\charprobunf^{(i)}$ denotes the probability that a freshly joining validator is of type $i$, that is can be corrupted with probability $\crprobunf^{(i)}$. 
To formally capture this,  we introduce the notion of a characterization:
\begin{definition}[Characterization]\label{def:characterization}
A characterization \charac is a sequence of value pairs
\begin{equation}
	\charac=\left((\crprobunf^{(1)},\charprobunf^{(1)}),\ldots, (\crprobunf^{(\ell)},\charprobunf^{(\ell)})\right) 
\end{equation}
such that $\crprobunf^{(i)},\charprobunf^{(i)}\in[0,1]$ for $i=1,\ldots,\ell$
and $\sum_{i=1}^{\ell}\charprobunf^{(i)}=1$.
\end{definition}

We are now ready to define a random variable \RVreladv that expresses the fraction of adversarial power. {Recall that \numminers denotes the number of validators in the network.} 
\begin{definition}[Random Variable $\RVreladv(\numminers,\charac)$]\label{def:RV}
	We consider the following random experiment, based on an integer \numminers and a characterization~$\charac=\left((\crprobunf^{(1)},\charprobunf^{(1)}),\ldots, (\crprobunf^{(\ell)},\charprobunf^{(\ell)})\right)$. 
In the experiment, the following procedure is repeated for $j=1,\ldots,\numminers$. First, a validator type is sampled according to the distribution $\left(\charprobunf^{(1)},\ldots,\charprobunf^{(\ell)}\right)$. Let the outcome be denoted by $i_j$. Then, with a probability of $\crprobunf^{(i_j)}$, the validator is marked as corrupted.
	
	The random variable $\RVreladv(\numminers,\charac)$ is defined by the fraction of successful \corruption attempts, that is 
	\begin{equation}
		\RVreladv(\numminers,\charac)=\frac{\#\ \mathrm{\corrupted\ \txtminers}}{\numminers}.
	\end{equation}
	\end{definition}

It holds that the probability of a new validator of getting corrupted is
\begin{equation}
\EV:=\sum_{i=1}^{\ell}\charprobunf^{(i)}\cdot \crprobunf^{(i)}
\end{equation}
Note that value of $\RVreladv(\numminers,\charac)$ is exactly the parameter \poweradv from \cref{th:sec_condition}. Our approach allows to express the probability that \poweradv is below a certain threshold. Namely, 
$\prob{\RVreladv(\numminers)\leq \frac{k}{\numminers}}$ for any choice of $k$ is equal to the probability that up to $k$ out of \numminers \corruption attempts have been successful.
It therefore holds that:
\begin{eqnarray}
	&&\prob{\RVreladv(\numminers)\leq \frac{k}{\numminers}}\nonumber\\
& =& \sum_{i=0}^{k } \binom{{\numminers}}{i} \left(p^*\right)^{i} \cdot  \left(1-p^*\right)^{\numminers-i} \label{eq:probRVBound}
\end{eqnarray}

For analyzing the asymptotic behavior of $\RVreladv(\numminers,\charac)$ (in dependence of $\numminers$), we will derive some bounds for $\RVreladv(\numminers,\charac)$. 
\begin{theorem}[Bounding $\RVreladv(\numminers,\charac)$]\label{th:bounding-poweradv}
	
	Let $\charac=\left( (\crprobunf^{(i)},\charprobunf^{(i)})\right)_{i=1}^\ell$ and $\EV:=\sum_{i=1}^{\ell}\charprobunf^{(i)}\cdot \crprobunf^{(i)} $. Then, it holds for any
	$0 \leq  \varepsilon < p^*$ that 
		\begin{equation}
		\prob{\left|\RVreladv(\numminers,\charac)-\EV\right|\geq \varepsilon}\leq 2e^{-\numminers \cdot \frac{\varepsilon^2}{3\cdot \EV}}.	\label{eq:boundsWithEpsilon}
	\end{equation}

\end{theorem}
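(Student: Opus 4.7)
The plan is to reduce the problem to a standard two-sided multiplicative Chernoff bound on a sum of i.i.d.\ Bernoulli variables. For each $j \in \{1, \ldots, \numminers\}$, let $Z_j$ be the indicator that the $j$-th validator is corrupted. By the law of total probability applied to the two-stage experiment in \cref{def:RV} (first sample the type $i_j \in \{1,\ldots,\ell\}$ according to $(\charprobunf^{(1)},\ldots,\charprobunf^{(\ell)})$, then flip a biased coin with success probability $\crprobunf^{(i_j)}$),
\begin{equation*}
\prob{Z_j = 1} \;=\; \sum_{i=1}^{\ell} \charprobunf^{(i)} \cdot \crprobunf^{(i)} \;=\; \EV.
\end{equation*}
Since the experiment is repeated independently for each $j$, the $Z_j$ are i.i.d.\ Bernoulli random variables with parameter $\EV$, and $\numminers \cdot \RVreladv(\numminers,\charac) = \sum_{j=1}^{\numminers} Z_j$ is distributed as $\distrBin(\numminers, \EV)$ with expectation $\mu := \numminers \cdot \EV$.

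Next, I would invoke the standard multiplicative Chernoff bound: for $X \sim \distrBin(n, p)$ with $\mu = np$ and any $\delta \in (0,1)$,
\begin{equation*}
\prob{|X - \mu| \geq \delta \mu} \;\leq\; 2\, e^{-\mu \delta^2 / 3}.
\end{equation*}
I apply this with $X = \numminers \cdot \RVreladv(\numminers,\charac)$, $\mu = \numminers \cdot \EV$, and $\delta := \varepsilon / \EV$. The hypothesis $0 \leq \varepsilon < \EV$ precisely ensures $\delta \in [0,1)$, so the Chernoff bound is applicable. Substituting yields
\begin{equation*}
\prob{\bigl|\numminers \cdot \RVreladv(\numminers,\charac) - \numminers \cdot \EV\bigr| \geq \varepsilon \cdot \numminers} \;\leq\; 2\, e^{-\numminers \cdot \EV \cdot (\varepsilon/\EV)^2 / 3} \;=\; 2\, e^{-\numminers \cdot \varepsilon^2/(3\EV)},
\end{equation*}
and dividing the event inside the probability by $\numminers$ gives exactly inequality~(\ref{eq:boundsWithEpsilon}).

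There is no real obstacle here: the heart of the argument is simply recognizing that, after marginalizing over the type, the per-validator corruption events are i.i.d.\ Bernoulli$(\EV)$, which collapses the two-layer sampling of \cref{def:characterization} to a binomial. The only minor care needed is matching the deviation parameter $\varepsilon$ (an additive deviation on the normalized fraction $\RVreladv$) to the relative deviation $\delta$ of the multiplicative Chernoff form, and verifying the hypothesis $\varepsilon < \EV$ yields a valid $\delta < 1$. If the authors prefer an additive Hoeffding-style bound instead, one could alternatively apply Hoeffding's inequality directly to the bounded i.i.d.\ variables $Z_j \in \{0,1\}$, but the $3\EV$ in the denominator of the exponent specifically signals the multiplicative Chernoff route described above.
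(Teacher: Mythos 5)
Your proposal is correct and follows essentially the same route as the paper: identify the per-validator corruption indicators as i.i.d.\ Bernoulli variables with parameter $\EV$ (collapsing the two-stage type/corruption sampling via total probability), apply the two-sided multiplicative Chernoff bound to their sum, and substitute the relative deviation $d=\varepsilon/\EV$, with the hypothesis $\varepsilon<\EV$ ensuring $d<1$. No meaningful differences from the paper's own argument.
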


An immediate consequence of \cref{th:bounding-poweradv} is the following:
\begin{corollary}[Limits of $\RVreladv(\numminers,\charac)$]\label{cor:limes}
	Let~$\charac=\left( (\crprobunf^{(i)},\charprobunf^{(i)})\right)_{i=1}^\ell$. It holds that
	\begin{equation}
		\lim_{\numminers\rightarrow \infty}\RVreladv(\numminers,\charac)=\sum_{i=1}^{\ell}\charprobunf^{(i)}\cdot \crprobunf^{(i)}. \label{eq:RV_approximation}
	\end{equation}
\end{corollary}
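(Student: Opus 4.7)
\textbf{Proof plan for Corollary~\ref{cor:limes}.}
The corollary is essentially a law-of-large-numbers statement that follows almost for free from the concentration inequality in Theorem~\ref{th:bounding-poweradv}. My plan is to show that the sequence of random variables $\RVreladv(\numminers,\charac)$ converges to $\EV=\sum_{i=1}^{\ell}\charprobunf^{(i)}\cdot \crprobunf^{(i)}$ by bounding the probability of a non-trivial deviation and letting this bound vanish as $\numminers\to\infty$.

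\textbf{Key steps.} First I would fix an arbitrary $\varepsilon$ with $0<\varepsilon<\EV$ (the regime in which Theorem~\ref{th:bounding-poweradv} is applicable; the case $\EV=0$ is trivial since then every $\crprobunf^{(i)}=0$ whenever $\charprobunf^{(i)}>0$, so $\RVreladv(\numminers,\charac)=0$ identically). Applying Theorem~\ref{th:bounding-poweradv} directly gives
\begin{equation*}
\prob{\left|\RVreladv(\numminers,\charac)-\EV\right|\geq \varepsilon}\ \leq\ 2\,e^{-\numminers\cdot \frac{\varepsilon^{2}}{3\EV}}.
\end{equation*}
Since the exponent is linear in $\numminers$ with a strictly negative coefficient, the right-hand side tends to $0$ as $\numminers\to\infty$. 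This already establishes convergence in probability:
\begin{equation*}
\lim_{\numminers\to\infty}\prob{\left|\RVreladv(\numminers,\charac)-\EV\right|\geq \varepsilon}=0\quad\text{for every }\varepsilon>0.
\end{equation*}
To upgrade this to the stronger almost-sure statement (which is the natural reading of the notation $\lim_{\numminers\to\infty}\RVreladv(\numminers,\charac)$), I would note that $\sum_{\numminers=1}^{\infty} 2\,e^{-\numminers\varepsilon^{2}/(3\EV)}<\infty$, so a standard Borel--Cantelli argument implies that $|\RVreladv(\numminers,\charac)-\EV|\geq \varepsilon$ occurs only finitely often almost surely. Taking a countable sequence $\varepsilon_{k}\downarrow 0$ then yields $\RVreladv(\numminers,\charac)\to \EV$ almost surely and hence in the stated limit sense.

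\textbf{Main obstacle.} There is no real technical difficulty here since all the work is absorbed into Theorem~\ref{th:bounding-poweradv}; the only subtlety is the boundary case $\varepsilon \geq \EV$, which is not covered by that theorem's hypothesis. However, this is not an obstacle for the corollary because the limit statement only requires the bound to hold for arbitrarily small $\varepsilon>0$, which is precisely the regime where $\varepsilon<\EV$ (assuming $\EV>0$). The degenerate case $\EV=0$ is handled separately by the direct observation above. The whole proof should therefore reduce to one invocation of Theorem~\ref{th:bounding-poweradv} plus a one-line limit argument.
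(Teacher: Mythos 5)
Your proposal is correct and follows exactly the route the paper intends: the paper gives no separate proof, declaring the corollary an immediate consequence of Theorem~\ref{th:bounding-poweradv}, and your argument simply spells this out. Your additional care (the Borel--Cantelli upgrade to almost-sure convergence and the degenerate case $\EV=0$) makes explicit what the paper leaves implicit, but it is the same approach.
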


To prove \cref{th:bounding-poweradv}, we make use of the following Chernhoff Bound:
\begin{lemma}[Chernhoff Bound]
	\label{lemma:chernhoff}
	Let $X_1,\ldots ,X_n$ be independent random variables such that $X_i$ always lies in the interval $[0; 1]$. Define $X=\sum_i X_i$ and $\mu=\ExpVal{}{X}$. 
	Then for any $d \in (0,1)$, it holds that 
	\begin{equation}
		\prob{|X-\mu|\geq d\cdot \mu}\leq 2e^{-\frac{\mu\cdot d^2}{3}}.\label{eq:twosidebound}
	\end{equation}
\end{lemma}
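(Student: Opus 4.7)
The plan is to carry out the classical moment generating function (MGF) argument combined with Markov's inequality, handling the upper and lower tails separately and then combining them via a union bound. First I would fix $t>0$ and apply Markov's inequality to $e^{tX}$, obtaining
\begin{equation*}
\prob{X \geq (1+d)\mu} \;\leq\; e^{-t(1+d)\mu}\cdot \ExpVal{}{e^{tX}}.
\end{equation*}
Using independence of $X_1,\ldots,X_n$, the MGF factors: $\ExpVal{}{e^{tX}} = \prod_i \ExpVal{}{e^{tX_i}}$. Because each $X_i$ lies in $[0,1]$ and $x\mapsto e^{tx}$ is convex, the chord bound $e^{tx}\leq 1 - x + xe^t$ holds on $[0,1]$; taking expectations and applying $1+y\leq e^y$ yields $\ExpVal{}{e^{tX_i}}\leq e^{\ExpVal{}{X_i}(e^t-1)}$. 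Multiplying over $i$ gives $\ExpVal{}{e^{tX}}\leq e^{\mu(e^t-1)}$.

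Next I would optimize in $t$. Substituting $t=\ln(1+d)>0$ into the Markov bound produces the classical expression
\begin{equation*}
\prob{X \geq (1+d)\mu} \;\leq\; \left(\frac{e^{d}}{(1+d)^{1+d}}\right)^{\!\mu}.
\end{equation*}
The key analytic step is then the elementary inequality $\frac{e^{d}}{(1+d)^{1+d}}\leq e^{-d^{2}/3}$ for $d\in(0,1)$, which I would verify by taking logarithms and showing that $h(d):=d-(1+d)\ln(1+d)+d^{2}/3$ satisfies $h(0)=0$ and $h'(d)\leq 0$ on $(0,1)$ (equivalently, via the alternating series expansion of $(1+d)\ln(1+d)$). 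This yields the upper-tail bound $\prob{X\geq(1+d)\mu}\leq e^{-\mu d^{2}/3}$.

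For the lower tail I would run the symmetric argument with $t<0$, choosing $t=\ln(1-d)$. The same convexity-based MGF bound (now applied to $-X_i$, or equivalently using $\ExpVal{}{e^{tX_i}}\leq e^{\ExpVal{}{X_i}(e^{t}-1)}$ with $t<0$) leads to
\begin{equation*}
\prob{X \leq (1-d)\mu} \;\leq\; \left(\frac{e^{-d}}{(1-d)^{1-d}}\right)^{\!\mu} \;\leq\; e^{-\mu d^{2}/2} \;\leq\; e^{-\mu d^{2}/3},
\end{equation*}
where the inequality $(1-d)^{1-d}e^{d}\geq e^{d^{2}/2}$ on $(0,1)$ again reduces to a one-variable calculus exercise (in fact, the lower-tail constant $2$ is sharper than the upper-tail constant $3$, which is why the exponent $3$ in the statement is the binding one). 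A union bound over the two tails then gives $\prob{|X-\mu|\geq d\mu}\leq 2 e^{-\mu d^{2}/3}$, as claimed.

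The main obstacle is the scalar inequality $\frac{e^{d}}{(1+d)^{1+d}}\leq e^{-d^{2}/3}$ for $d\in(0,1)$: this is the single place where the specific constant $3$ in the exponent gets nailed down, and it governs the upper tail. Everything else (Markov, convexity, independence, and the union bound) is routine; the only care needed is to keep the direction of the inequalities consistent when $t$ switches sign in the lower-tail computation.
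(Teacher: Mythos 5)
Your proof is correct, and it is the standard moment-generating-function derivation of the multiplicative Chernoff bound. Note that the paper itself does not prove \cref{lemma:chernhoff} at all: it states it as a known result and only uses it inside the proof of \cref{th:bounding-poweradv}, so there is no in-paper argument to compare against --- your write-up simply supplies the omitted textbook derivation. The steps you identify are the right ones and all go through: the chord bound $e^{tx}\le 1+x(e^t-1)$ on $[0,1]$ (valid for either sign of $t$) is exactly what extends the usual Bernoulli argument to general $[0,1]$-valued $X_i$, as required by the lemma's hypotheses; the scalar inequality $e^{d}(1+d)^{-(1+d)}\le e^{-d^2/3}$ on $(0,1]$ is indeed the step that fixes the constant $3$ (your $h(d)$ satisfies $h(0)=0$, $h'(0)=0$, and $h'(d)=\tfrac{2d}{3}-\ln(1+d)<0$ on $(0,1]$ since $h''$ changes sign once and $h'(1)=\tfrac{2}{3}-\ln 2<0$); the lower tail gives the stronger exponent $d^2/2$, which you correctly relax to $d^2/3$ before applying the union bound to obtain the factor $2$. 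The only cosmetic caveat is the degenerate case $\mu=0$, where the claimed bound is trivially true since the right-hand side is at least $1$, so nothing breaks.
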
 
We are now ready to prove \cref{th:bounding-poweradv}.
\begin{proof}[Proof of \cref{th:bounding-poweradv}]
	We consider the random experiment as defined in \cref{def:RV}. For $i=1,\ldots,\numminers$, we define 
	the following Bernoulli random variables:
	\begin{equation}
		\RVabs_i = 
		\begin{cases}
			1, & \text{ if $i$-th \txtminer is corrupted}\\
			0, & \text{otherwise.}
		\end{cases}
	\end{equation}
	Recall that for any \txtminer, 
	$\charprobunf^{(j)}$ expresses the probability that it can be corrupted with probability $\crprobunf^{(j)}$.
	As a consequence, we can rephrase the random variables $\RVabs_i$ as follows:
	\begin{equation}
		\RVabs_i = 
		\begin{cases}
			1, & \text{with probability $\EV$} \\
			0, & \text{else}
		\end{cases}
	\end{equation}
	with $\EV=\sum_{j=1}^{\ell}\charprobunf^{(j)}\cdot \crprobunf^{(j)}$. Observe that 
	$X=\sum_{i\in[\numminers]}\RVabs_i$ is the number of \corrupted \txtminers and hence 
	\begin{equation}
		\RVreladv(\numminers,\charac)=\frac{X}{\numminers}=\frac{\sum_{i\in[\numminers]}\RVabs_i}{\numminers}.\label{eq:RelationRVandX}
	\end{equation}
	Moreover, the expected value of a single random variable $\RVabs_i$ is
	\begin{equation}
		\ExpVal{}{\RVabs_i}= \EV.
	\end{equation}
	As the variables $\RVabs_i$ are pairwise independent, it follows that 
	\begin{equation}
		\ExpVal{}{X} =  \ExpVal{}{\sum_{i=1}^{\numminers}\RVabs_i} = \numminers\cdot \EV.
	\end{equation}
	The Chernoff bound (\cref{lemma:chernhoff}) gives for any $d\in(0,1)$
	\begin{eqnarray*}
		\prob{\left|X - \numminers  \cdot \EV \right| \geq  d \cdot (\numminers  \cdot \EV)} &\leq& 2e^{-\frac{\numminers\cdot \EV d^2}{3}}\\
		\Leftrightarrow \prob{\left|\frac{X}{\numminers}-\EV\right|\geq  d\cdot \EV } &\leq& 2e^{-\frac{\numminers\cdot \EV d^2}{3}}\\
		\stackrel{\eqref{eq:RelationRVandX}}{\Leftrightarrow} \prob{\left|\RVreladv(\numminers,\charac)-\EV\right|\geq  d\cdot \EV } &\leq& 2e^{-\frac{\numminers\cdot \EV d^2}{3}}\\
	\end{eqnarray*}
	If we express $d:=\frac{\varepsilon}{\EV}$ for a sufficiently small $\varepsilon$ such that the condition $0<\delta<1$ is preserved, i.e., $0 \leq  \varepsilon < \EV$, then $d\cdot \EV$ can be replaced by $\varepsilon$ and we get \cref{eq:boundsWithEpsilon}.
\end{proof}

\subsection{Security Analysis}\label{sec:OverallSecurity}
Next, we use the previous results to derive our main theorem:
\begin{theorem}\label{th:main}
	Consider a Nakamoto-style blockchain with $\numnodes=\numminers+\numzp$ \txtnodes with \numminers being the number of \txtminers and \numzp being the number of \txtzp. 
	Let $p^*$ denote the probability for an arbitrary \txtminer to be \corrupted. 
	
	It holds that:
	\begin{multline} 
		\prob{\secevent[\numminers,\numzp]} = \\ \sum_{i=0}^{\overbrace{\lfloor \poweradvmax\cdot \numminers \rfloor}^{:=g(n)}} \binom{{\numminers}}{i} \left(p^*\right)^{i} \cdot \left(1-p^*\right)^{\numminers - i}  \label{eq:success} 
	\end{multline}
	with 
	\begin{equation} \label{eq:adv_bound}
	\poweradvmax = \frac{\varMainTheorem + 1 + e}{2\varMainTheorem} - \sqrt{\left(\frac{\varMainTheorem + 1 + e}{2\varMainTheorem}\right)^2 - \frac{1}{\varMainTheorem}}, 
	\end{equation} 
	where $\varMainTheorem \in \Theta\left(e\rate\log(\numminers+\numzp)\right)$.
\end{theorem}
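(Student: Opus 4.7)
The plan is to chain together the three ingredients that were built up in the preceding subsections. First, I would invoke \cref{th:sec_condition} to replace the symbolic security event $\secevent[\numminers,\numzp]$ by the concrete algebraic condition $f(\poweradv)\cdot\rate\cdot\maxdelay<1$. Next, I would use \cref{th:maxdelay} to substitute $\maxdelay$ by its asymptotic expression $\Theta(\log(\numminers+\numzp))$, and then package the factor $e\cdot\rate\cdot\maxdelay$ into a single quantity $\varMainTheorem\in\Theta(e\rate\log(\numminers+\numzp))$. This cleanly separates the network-scale dependence from the adversarial-power dependence.

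The heart of the proof is then to rewrite the security condition as a condition on $\poweradv$ alone. Expanding $f(\poweradv)=\frac{e\poweradv(1-\poweradv)}{1-\poweradv(1+e)}$ and multiplying through, the inequality $f(\poweradv)\cdot\rate\cdot\maxdelay<1$ becomes, after collecting terms, the quadratic inequality
\begin{equation*}
\varMainTheorem\cdot\poweradv^{2}-(\varMainTheorem+1+e)\cdot\poweradv+1>0.
\end{equation*}
Since we are in the regime $\poweradv\in[0,\tfrac{1}{1+e})$ where $f$ is strictly increasing from $0$ to $\infty$ (\cref{lem:function}), this inequality is equivalent to $\poweradv<\poweradvmax$, where $\poweradvmax$ is the \emph{smaller} root of the associated quadratic. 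A direct application of the quadratic formula yields exactly the expression for $\poweradvmax$ given in~\eqref{eq:adv_bound}.

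Once this algebraic characterization is in place, the remainder is just plugging in the corruption model from \cref{sec:corruption}. By \cref{def:RV}, the realized adversarial fraction $\poweradv$ coincides with the random variable $\RVreladv(\numminers,\charac)$, so
\begin{equation*}
\prob{\secevent[\numminers,\numzp]}=\prob{\RVreladv(\numminers,\charac)<\poweradvmax}.
\end{equation*}
Because $\RVreladv(\numminers,\charac)$ only takes values of the form $k/\numminers$, the condition $\RVreladv<\poweradvmax$ reduces to $\RVreladv\leq \lfloor\poweradvmax\numminers\rfloor/\numminers$. Applying \eqref{eq:probRVBound} with $k=\lfloor\poweradvmax\numminers\rfloor$ delivers the claimed binomial sum in \eqref{eq:success}.

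I expect the main technical subtlety to be the algebra in step two: keeping track of the sign of $1-\poweradv(1+e)$ when clearing the denominator of $f(\poweradv)$ (so that the direction of the inequality is preserved), and arguing that the relevant root is indeed the smaller one. A minor care point is the treatment of the floor function and the strict-versus-non-strict inequality when passing from $\poweradv<\poweradvmax$ to $k\leq\lfloor\poweradvmax\numminers\rfloor$; this is routine but must be stated explicitly so that the upper summation index in \eqref{eq:success} is justified.
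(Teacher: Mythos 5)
Your proposal is correct and follows essentially the same route as the paper's proof: invoke \cref{th:sec_condition} to reduce $\secevent[\numminers,\numzp]$ to the threshold condition $\poweradv<\poweradvmax(\maxdelay)$, identify $\poweradv$ with $\RVreladv(\numminers,\charac)$ so that \cref{eq:probRVBound} yields the binomial sum, and use \cref{th:maxdelay} to get $\varMainTheorem=e\rate\maxdelay\in\Theta(e\rate\log(\numminers+\numzp))$. In fact you supply a detail the paper only asserts, namely the reduction of the security inequality to the quadratic $\varMainTheorem\poweradv^{2}-(\varMainTheorem+1+e)\poweradv+1>0$ and the identification of $\poweradvmax$ as its smaller root, which checks out.
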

Note that function $g(\nodes)$ captures what is commonly referred to as the ``Nakamoto coefficient'' and refers to the minimum number of nodes that have to be corrupted in order for an adversary to successfully attack a Nakamoto-style blockchain.

\begin{proof}[Proof of \cref{th:main}]
Because of \cref{th:sec_condition}, it holds that
\begin{eqnarray}
	\secevent[\numminers,\numzp]
	 &\Leftrightarrow&e\cdot \poweradv < \frac{1-\poweradv}{1+(1-\poweradv)\rate\maxdelay}\\
	&\Leftrightarrow&\poweradv\leq  \poweradvmax(\maxdelay).
\end{eqnarray}
Thus, it holds that:
\begin{equation}
	\prob{\secevent[\numminers,\numzp]} = \prob{\RVreladv(\numminers)\leq \poweradvmax}
\end{equation}
\cref{eq:success} is a direct consequence of \cref{eq:probRVBound}.

Moreover, it holds that
\begin{eqnarray}
	\poweradvmax(\maxdelay) :=& \frac{e\rate\maxdelay + 1 + e}{2e\rate\maxdelay} - \sqrt{\left(\frac{e\rate\maxdelay + 1 + e}{2e\rate\maxdelay}\right)^2 - \frac{1}{e\rate\maxdelay}} \label{eq:max_adv_power} \\
	=&\frac{\varMainTheorem + 1 + e}{2\varMainTheorem} - \sqrt{\left(\frac{\varMainTheorem + 1 + e}{2\varMainTheorem}\right)^2 - \frac{1}{\varMainTheorem}},
\end{eqnarray}
where $\varMainTheorem = e\rate\maxdelay$.
Since $\maxdelay\in \Theta\left(\log(\numminers+\numzp)\right)$ (cf. \cref{th:maxdelay}), we conclude that $\varMainTheorem \in \Theta\left(e\rate\log(\numminers+\numzp)\right)$.
\end{proof}

In \cref{fig:SG_Rate}, we quantify the security probability according to \cref{eq:success} with respect to the block generation rate $\rate$. We observe that $\prob{\secevent[\numminers,\numzp]}$ is almost 100\% for modest block generation rates $\rate \leq 10^{-4}$ and tends to 0\% for $\rate \geq 10^{-1}$. We also observe that $\prob{\secevent[\numminers,\numzp]}$ decreases as $\EV$ and \nodes increase (cf.~\cref{sec:network_attack}). 

Further, \cref{th:main} allows us to express the impact of \nodes on the overall security of a blockchain. Actually, it holds that if \nodes keeps on increasing, the blockchain inevitably becomes insecure:
\begin{lemma}[Limits of $\prob{\secevent[\numminers,\numzp]}$ ]\label{lem:limesSecurity}
	It holds that
	\begin{equation}
		\lim_{\numnodes=\numminers+\numzp\rightarrow\infty}\prob{\secevent[\numminers,\numzp]}=0.
	\end{equation}
\end{lemma}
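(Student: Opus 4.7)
The plan is to invoke \cref{th:main} to rewrite $\prob{\secevent[\numminers,\numzp]} = \prob{\RVreladv(\numminers,\charac) \leq \poweradvmax}$, and then to drive this probability to $0$ by establishing two ingredients: (i) the security threshold $\poweradvmax$ vanishes as the network grows, and (ii) $\RVreladv(\numminers,\charac)$ concentrates tightly around the strictly positive corruption probability $p^*$, so the chance of falling below the shrinking threshold tends to zero.

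For (i), I would rationalize \eqref{eq:adv_bound} by multiplying the difference of a term and a square root by its conjugate, obtaining
\[
\poweradvmax \;=\; \frac{1/\varMainTheorem}{\dfrac{\varMainTheorem+1+e}{2\varMainTheorem} + \sqrt{\left(\dfrac{\varMainTheorem+1+e}{2\varMainTheorem}\right)^{\!2} - \dfrac{1}{\varMainTheorem}}}.
\]
By \cref{th:maxdelay}, $\maxdelay \in \Theta(\log(\numnodes))$, hence $\varMainTheorem \in \Theta(\log(\numnodes))$ diverges. The first summand in the denominator tends to $\tfrac{1}{2}$ and the square root tends to $\sqrt{\tfrac{1}{4}} = \tfrac{1}{2}$, so the denominator tends to $1$, giving $\poweradvmax = \Theta(1/\log(\numnodes)) \to 0$.

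For (ii), since $p^* > 0$ is fixed while $\poweradvmax \to 0$, there exists $N_0$ such that $\poweradvmax < p^*/2$ for all $\numnodes \geq N_0$. Applying \cref{th:bounding-poweradv} with $\varepsilon := p^* - \poweradvmax > p^*/2$ yields
\[
\prob{\RVreladv(\numminers,\charac) \leq \poweradvmax} \;\leq\; \prob{\left|\RVreladv(\numminers,\charac) - p^*\right| \geq \varepsilon} \;\leq\; 2\exp\!\left(-\frac{\numminers\,\varepsilon^2}{3\,p^*}\right),
\]
which vanishes as $\numminers \to \infty$.

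The main obstacle is the degenerate regime where only $\numzp$ grows while $\numminers$ remains bounded: in that case $\poweradvmax \cdot \numminers$ eventually drops below $1$, and \eqref{eq:success} collapses to the single term $(1-p^*)^{\numminers}$, a fixed positive constant rather than $0$. The lemma therefore implicitly requires $\numminers \to \infty$ in tandem with $\numnodes \to \infty$, which is the natural regime for real deployments where validators form a constant (non-vanishing) fraction of the nodes; I would make this assumption explicit in the statement. The core technical effort is the conjugate-rationalization estimate, which converts the delicate cancellation in \eqref{eq:adv_bound} into a clean $\Theta(1/\log(\numnodes))$ decay rate and enables an immediate application of the Chernoff-style concentration from \cref{th:bounding-poweradv}.
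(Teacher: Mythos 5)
Your proposal is correct for the regime that actually matters, and it is in fact tighter than the paper's own argument, though it follows a somewhat different route. The paper does not pass through $\poweradvmax$ at all: it works directly with the reformulated condition $f(\poweradv)\cdot\rate\cdot\maxdelay<1$ from \cref{th:sec_condition}, splits the limit into two sub-claims (fixed \numminers with $\numzp\rightarrow\infty$, and $\numminers\rightarrow\infty$), and for the second one combines \cref{th:maxdelay} with the qualitative limit statement of \cref{cor:limes} ($\RVreladv\rightarrow p^*$) to argue that the product diverges, so the condition eventually fails. You instead rewrite $\prob{\secevent[\numminers,\numzp]}=\prob{\RVreladv(\numminers,\charac)\leq\poweradvmax}$ via \cref{th:main}, rationalize \cref{eq:adv_bound} to get the clean quantitative decay $\poweradvmax\in\Theta(1/\log(\numnodes))$, and then apply the Chernoff bound of \cref{th:bounding-poweradv} directly, obtaining an explicit exponential rate $2\exp(-\numminers\varepsilon^2/(3p^*))$. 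This buys rigor and a rate: the paper's step ``$f(\poweradv(\numminers))\rightarrow\text{const}$, hence the product diverges, hence the probability is $0$'' silently converts an almost-sure limit of a random variable into a statement about probabilities, which your concentration argument makes precise. (Both arguments, yours and the paper's, implicitly need $p^*>0$.)

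Your caveat about the degenerate regime is not a defect of your proof but a genuine observation about the lemma and the paper's proof of it. The paper's first sub-claim, $\lim_{\numzp\rightarrow\infty}\prob{\secevent[\numminers,\numzp]}=0$ for fixed \numminers, is argued by treating $f(\poweradv(\numminers))$ as a nonzero constant; this ignores the event that no \txtminer is corrupted, which has probability $(1-p^*)^{\numminers}>0$ and under which $\poweradv=0$, $f(0)=0$, and the security condition holds no matter how large \maxdelay becomes. As you compute, along sequences with bounded \numminers the probability in \cref{eq:success} converges to exactly $(1-p^*)^{\numminers}$, not $0$, so the lemma as stated requires $\numminers\rightarrow\infty$ (or $p^*=1$); making that hypothesis explicit, as you suggest, is the right fix, and with it your argument is complete.
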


\begin{proof}[Proof]
	Due to \cref{th:sec_condition}, it holds that: 
	\begin{equation}
		\secevent[\numminers,\numzp]\Leftrightarrow f(\poweradv(\numminers))\cdot \rate\cdot \maxdelay(\numminers,\numzp) < 1.\label{eq:limesConditions}
	\end{equation}
	To show the claim for $\lim_{\numnodes=\numminers+\numzp\rightarrow\infty}$, we need to prove the following 2 statements: 
	\begin{eqnarray*}
		\lim_{\numzp\rightarrow\infty}\prob{\secevent[\numminers,\numzp]}&=&0\ \mathrm{and}\\
		\lim_{\numminers\rightarrow\infty}\prob{\secevent[\numminers,\numzp]}&=&0.
	\end{eqnarray*}
	Let $\numminers$ be some fixed value. Recall that $\maxdelay(\numminers,\numzp)\in\mathcal{O}(\log(\numzp))$ due to \cref{th:maxdelay}. Hence, we have
	\begin{equation}
		\lim_{\numzp\rightarrow\infty}(\underbrace{f(\poweradv(\numminers))}_{=\mathrm{const.}}\cdot \rate\cdot \underbrace{\maxdelay(\numminers,\numzp)}_{\rightarrow \infty})=\infty.
	\end{equation}
	
	In particular, it is guaranteed that for a sufficiently large \numzp, the condition given in \cref{eq:limesConditions} cannot be met. Moreover:
	\begin{equation*}
		\lim_{\numzp\rightarrow\infty}\prob{\secevent[\numminers,\numzp]}=0.
	\end{equation*}
	This proves the first statement.
	
	To prove the second statement, we make use of the fact that due to \cref{cor:limes}, it holds that 
	$\lim_{\numminers\rightarrow\infty}\poweradv(\numminers)=\mathrm{const.}$. Thus, similar to above, it follows
	\begin{equation}
		\lim_{\numminers\rightarrow\infty}(\underbrace{f(\poweradv(\numminers))}_{\rightarrow \mathrm{const.}}\cdot \rate\cdot \underbrace{\maxdelay(\numminers,\numzp)}_{\rightarrow \infty})=\infty.
	\end{equation}
\end{proof}

\noindent \textbf{Summary of the results. }
\cref{th:sec_condition} 
illustrates the relation between the three parameters \poweradv, \rate, and \maxdelay. In particular, the higher one of these values, the lower the others need to be. 
In combination with \cref{th:maxdelay}, it confirms the common belief that an increase in the number of nodes increases the communication delay and  hence reduces the set of possible safe parameters. However, it also shows that the impact on \maxdelay is only logarithmic, which is confirmed by our experiments in \cref{sec:simulator}.

This is also important when one relates the impact of \nodes on \maxdelay with its impact on \poweradv. On the one hand, \cref{lem:limesSecurity} shows that if a blockchain gets too large (in terms of \nodes), guaranteeing security is impossible. Thus, the decentralization approach has its limits and having too many \txtminers can be counterproductive. 
On the other hand, an increase of \numminers makes the system more robust against \corruption.  From \cref{th:bounding-poweradv}, we know that $\prob{\RVreladv(\numminers)\leq \frac{k}{\numminers}}$ for some fixed $k$ is subject to high variations for small \numminers but becomes more stable with increasing \numminers.  

That is, we have two concurrent trends\footnote{As an increase of \numzp only increases the communication delay without helping to protect against corruption, we ignore \numzp  from now on and focus on $\prob{\secevent[\numminers]}$. }: increasing \numminers negatively impacts security due to the increase of \maxdelay, and at the same time, positively impacts security as high values of \poweradv become less likely. 

In Section~\ref{sec:real_world_analysis}, we investigate the impact of real-world parameters on these tradeoffs.

\begin{figure}
	\centering
	\includegraphics[width=1.0\linewidth]{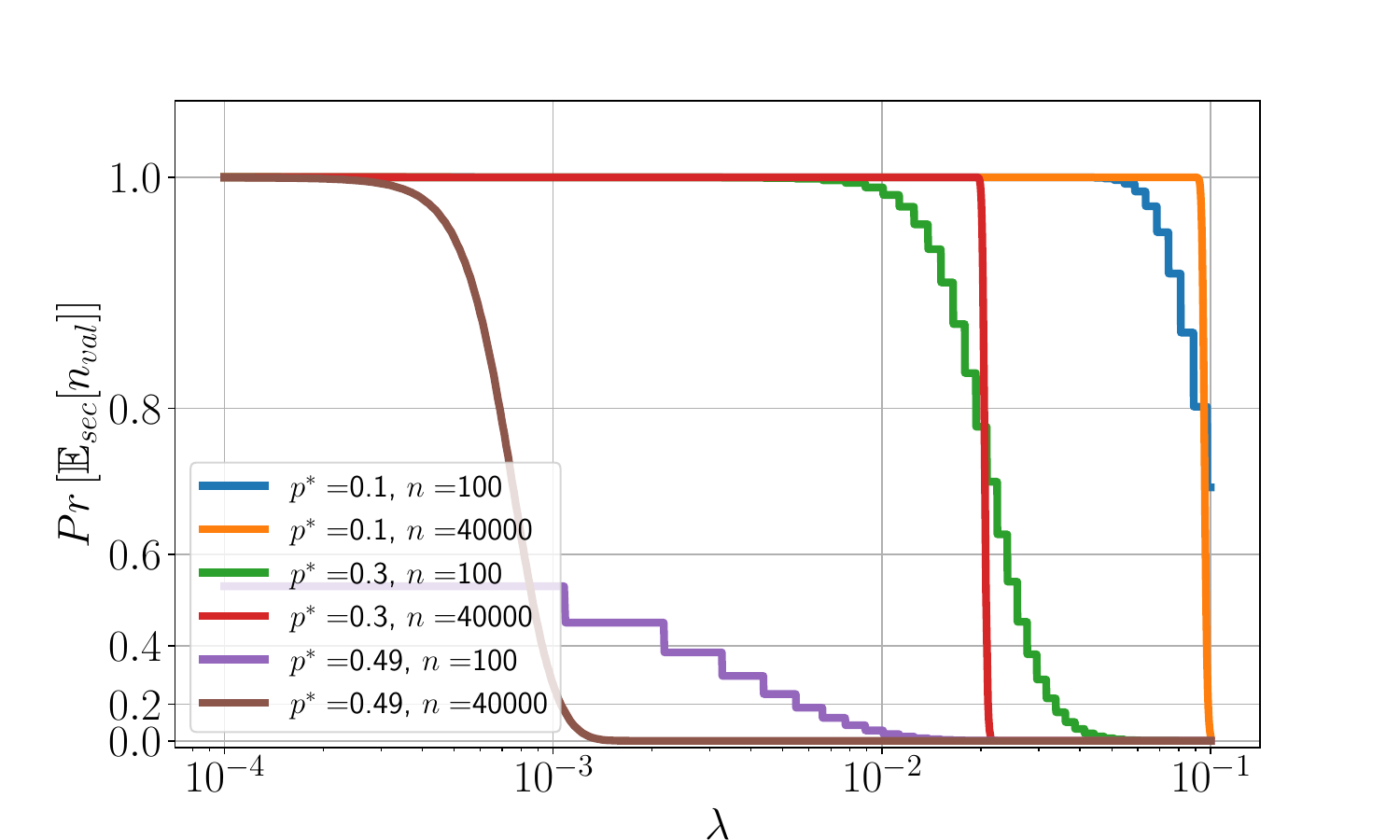}
	\caption{Impact of the block generation rate on the security probability  (cf. \cref{eq:success}). } 
	\label{fig:SG_Rate}
\end{figure}

\section{Empirical Validation} \label{sec:simulator}

To analyze the outcomes of our analysis (cf.~\cref{sec:decentralized}) on current real-world deployments, we conducted large-scale measurements to derive realistic upper bounds on the network delay as the network scale grows. Our ultimate goal is to establish a precise relation between the number of nodes~\nodes and the maximum network delay~\maxdelay for each of the analyzed Nakamoto-style deployments. Besides validating the theoretical results, this first set of experiments allows us to derive protocol-specific constants~$a,b$ such that $\maxdelay(n) \approx a \cdot \log(n)+b$ (cf.~\cref{th:maxdelay}).

\subsection{Methodology} \label{sec:methodology}
\newcommand{\ncp}{\ensuremath{p^{nt}_{adv}}\xspace}
\newcommand{\ncs}{\ensuremath{p^{nt}_{con}}\xspace}
\newcommand{\nct}{\ensuremath{nt_{delay}}\xspace}

\newcommand{\push}{\textit{direct push}\xspace}
\newcommand{\adv}{\textit{advertisement-based}\xspace}
\newcommand{\hybrid}{\textit{hybrid push}\xspace}
\newcommand{\cbr}{\textit{compact blocks}\xspace}
\newcommand{\inv}{\textit{inv}\xspace}
\newcommand{\txtzps}{zero-power nodes\xspace}
\newcommand{\maxdelayq}{\ensuremath{\Delta_{0.9}}\xspace}  

We opted to validate our findings empirically by means of simulations. Indeed, our analysis explores the security of large-scale blockchains, i.e., those comprising of hundreds of thousands of nodes. Simulations emerge as one of the few workable options to empirically evaluate such large-scale networks (it is virtually impossible to deploy a blockchain network with hundreds of thousands of nodes otherwise).

All our experiments were executed on a server running Ubuntu 18.04 and OpenJDK 11.07. The server is equipped with two AMD EPYC 7642 48-Core Processors and 256 GB of RAM. Unless stated otherwise, every data point in our plots is averaged over 5 independent runs of 100 blocks. 

\paragraph{Overview of SimBlock} 
Our implementation is based on the open-source SimBlock simulator~\cite{DBLP:conf/infocom/AokiOKBS19} implemented in Java.
SimBlock is a simulator designed specifically for evaluating the block propagation in real-world blockchain networks.
Its primary purpose is to provide a controlled environment for assessing blockchain performance and security. 
Towards this end, it simulates a network of nodes placed in different regions according to some input geographical distribution, 
and derives the network performance (i.e., throughput and latency) of each node from a configuration file specifying delays between regions (i.e., a node in Europe has a lower latency to send to another node in Europe compared to Asia, North America, etc.). The total delay to send a message from node A to node B is then based on the message size, the minimum between the upload bandwidth of node A and the download bandwidth of node B, and the maximum between the two nodes' latencies.

SimBlock simulates communication between blockchain nodes based on events. 
Events are triggered by nodes sending messages over the network and generating blocks.
SimBlock discretizes time using milliseconds as the base unit and constructs a global timeline of all events based on their chronological order so that they are processed sequentially (simultaneous events are processed in an arbitrary order).
Events associated to a message are added to the global timeline based on their reception time.
For each message, the reception time is interpolated from the propagation delay (defined as the maximum latency between the sender and the receiver) and the transmission delay (defined as the size of the message divided by the minimum throughput between the sender and receiver). 

The simulation is initiated by a randomly selected node generating the genesis block and adopting it as the longest chain.
Upon reception or adoption of a new block, nodes proceed with the propagation by compiling a list of messages to transmit to their neighboring nodes, based on the specified propagation mechanism (cf. Figure~\ref{fig:gossip}).
If the node is a non-zero miner, the node further randomly samples the time to generate the next block following an exponential distribution 
with $\lambda=\frac{1}{600}*\frac{\absmp}{\absmp(\setminers)}$, where $\frac{\absmp}{\absmp(\setminers)}$ represents the node's relative mining power; the resulting block generation time is then added to the timeline.

Events within SimBlock encompass all message exchanges between nodes (e.g., \inv messages in Figure~\ref{fig:gossip}) as well as block generation events.
This allows us to accurately measure the propagation of different messages and blocks, as well as implementing different adversarial strategies, such as incurring artificial delay on certain links. Notice that simulations are deterministic assuming the same initial random seed. More importantly, the output of the simulator does not depend on hardware configurations, such as the number of cores or the amount of RAM.

\paragraph{Calibration}  To ensure a realistic setup,
we calibrated the configuration file specifying the network's bandwidth and latency so as 
to yield comparable block-propagation times and stale-block rate to the KASTEL dataset~\cite{DBLP:conf/p2p/DeckerW13, kastel}. We also relied on the geographical distribution of nodes as reported by~\cite{bitnodes}.
We include additional details on the parameter calibration of the simulator and the resulting average throughput and latency for each region in Appendix~\ref{ap:calibration}.

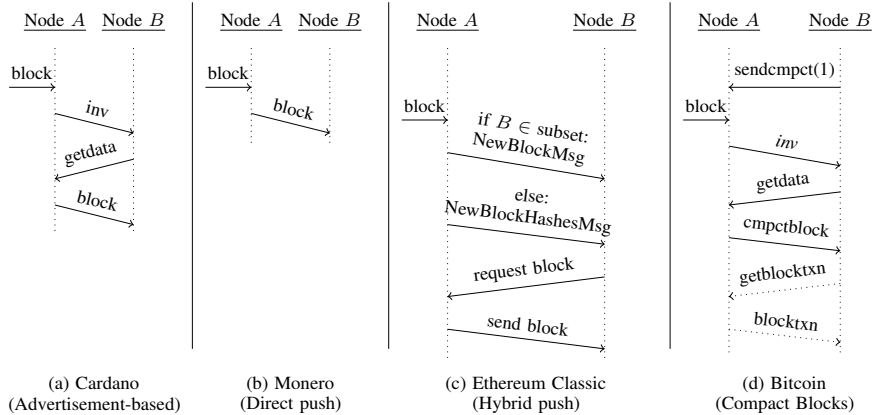
\begin{figure*}[h!]
	\centering
	\scalebox{0.87}{
	\begin{tikzpicture}
		\tikzstyle{every node}=[font=\footnotesize]
		\def\xab{0}
		\def\xhp{6}
		\def\xdp{3}
		\def\xcmpt{10.3}

		\def\abwidth{1.2}
		\node at (\xab + \abwidth/2, -5.7) [text width=2.8cm, align=center] (label_ab) {(a) Cardano\\(Advertisement-based)};

		\node at (0+\xab, 0) (A1) {\uline{$\text{Node}\ A$}};
		\node at (\abwidth+\xab, 0) (B1) {\uline{$\text{Node}\ B$}};
		\draw[dotted] (0+\xab, -0.4) -- (0+\xab, -3.2);
		\draw[dotted] (\abwidth+\xab, -0.4) -- (\abwidth+\xab, -3.2);

		\draw[->] (-0.7+\xab, -1) -- (0+\xab, -1) node [midway, above, sloped] (TextNode) {block};

		\draw[->] (0+\xab, -1.4) -- (\abwidth+\xab, -1.7) node [midway, above, sloped] (TextNode) {inv};
		\draw[->] (\abwidth+\xab, -2.1) -- (0+\xab, -2.4) node [midway, above, sloped] (TextNode) {getdata};
		\draw[->] (0+\xab, -2.8) -- (\abwidth+\xab, -3.1) node [midway, above, sloped] (TextNode) {block};

		\def\xa{0}
		\def\xb{2.4}
		\def\xc{0.5}
		\def\xd{0.3}

		\node at (\xhp + \xb/2, -5.7) [text width=4.0cm, align=center] (label_ab) {(c) Ethereum Classic\\(Hybrid push)};

		\node at (\xa+\xhp, 0) (A2) {\uline{$\text{Node}\ A$}};
		\node at (\xb+\xhp, 0) (B2) {\uline{$\text{Node}\ B$}};
		\draw[dotted] (\xa+\xhp, -0.4) -- (\xa+\xhp, -5.2);
		\draw[dotted] (\xb+\xhp, -0.4) -- (\xb+\xhp, -5.2);

		\draw[->] (\xa-0.7+\xhp, -1.5) -- (\xa+\xhp, -1.5) node [midway, above, sloped] (TextNode) {block};

		\draw[->] (\xa+\xhp, -2.0) -- (\xb+\xhp, -2.0-1*\xd-1*0.1) node [midway, above, sloped,text width=2.0cm,align=center] (TextNode) {if $B \in$ subset: \\ NewBlockMsg};
		\draw[->] (\xa+\xhp, -1.5-2*\xd-2*\xc) -- (\xb+\xhp, -1.5-3*\xd-2*\xc) node [midway, above, sloped,text width=3.0cm,align=center] (TextNode) {else:\\NewBlockHashesMsg};
		\draw[->] (\xb+\xhp, -1.5-3*\xd-3*\xc) -- (\xa+\xhp, -1.5-4*\xd-3*\xc) node [midway, above, sloped] (TextNode) {request block};
		\draw[->] (\xa+\xhp, -1.5-4*\xd-4*\xc) -- (\xb+\xhp, -1.5-5*\xd-4*\xc) node [midway, above, sloped] (TextNode) {send block};

		\def\dpwidth{1.2}
		\node at (\xdp + \dpwidth/2, -5.7) [text width=4.0cm, align=center] (label_ab) {(b) Monero\\(Direct push)};

		\node at (0+\xdp, 0) (A1) {\uline{$\text{Node}\ A$}};
		\node at (\dpwidth+\xdp, 0) (B1) {\uline{$\text{Node}\ B$}};
		\draw[dotted] (0+\xdp, -0.4) -- (0+\xdp, -1.9);
		\draw[dotted] (\dpwidth+\xdp, -0.4) -- (\dpwidth+\xdp, -1.9);

		\draw[->] (-0.7+\xdp, -1) -- (0+\xdp, -1) node [midway, above, sloped] (TextNode) {block};

		\draw[->] (0+\xdp, -1.4) -- (\dpwidth+\xdp, -1.7) node [midway, above, sloped] (TextNode) {block};

		\def\xa{0}
		\def\xb{1.7}
		\def\xc{0.5}
		\def\xd{0.2}
		\def\ya{3.5}
		\def\yb{5.2}
		\def\yc{0.5}
		\def\yd{0.2}

		\node at (\xcmpt + \xb/2, -5.7) [text width=4.0cm, align=center] (label_ab) {(d) Bitcoin \\ (Compact Blocks)}; 

		\node at (\xa+\xcmpt, 0) (A) {\uline{$\text{Node}\ A$}};
		\node at (\xb+\xcmpt, 0) (B) {\uline{$\text{Node}\ B$}};
		\draw[dotted] (\xa+\xcmpt, -0.4) -- (\xa+\xcmpt, -5.1);
		\draw[dotted] (\xb+\xcmpt, -0.4) -- (\xb+\xcmpt, -5.1);

		\draw[->] (\xb+\xcmpt, -1) -- (\xa+\xcmpt, -1) node [midway, above, sloped] (TextNode) {sendcmpct(1)};

		\draw[->] (\xa-0.7+\xcmpt, -1.5) -- (\xa+\xcmpt, -1.5) node [midway, above, sloped] (TextNode) {block};

		\draw[->] (\xa+\xcmpt, -1.9) -- (\xb+\xcmpt, -1.9-1*\xd-1*0.1) node [midway, above, sloped] (TextNode) {\textit{inv}};
		\draw[->] (\xb+\xcmpt, -1.9-1*\xd-1*\xc) -- (\xa+\xcmpt, -1.9-2*\xd-1*\xc) node [midway, above, sloped] (TextNode) {getdata};
		\draw[->] (\xa+\xcmpt, -1.9-2*\xd-2*\xc) -- (\xb+\xcmpt, -1.9-3*\xd-2*\xc) node [midway, above, sloped] (TextNode) {cmpctblock};
		\draw[->, dotted] (\xb+\xcmpt, -1.9-3*\xd-3*\xc) -- (\xa+\xcmpt, -1.9-4*\xd-3*\xc) node [midway, above, sloped] (TextNode) {getblocktxn};
		\draw[->, dotted] (\xa+\xcmpt, -1.9-4*\xd-4*\xc) -- (\xb+\xcmpt, -1.9-5*\xd-4*\xc) node [midway, above, sloped] (TextNode) {blocktxn};

		\draw[-] (2.1, 0.3) -- (2.1, -5.0);
		\draw[-] (5.1, 0.3) -- (5.1, -5.0);
		\draw[-] (9.4, 0.3) -- (9.4, -5.0);
	\end{tikzpicture}}
	\caption{Summary of the popular gossip protocols used in existing Nakamoto-style blockchains.}
	\label{fig:gossip}
\end{figure*}

\paragraph{Modifications to SimBlock} 
Unfortunately,
the publicly available version of SimBlock exhibits serious scalability limitations.
For instance, the maximum scale of a blockchain network that we could emulate was initially limited to around \num{70000} \txtminers in our experimental setup due to limitations in the precision of floating points when calculating the difficulty of the blocks.
Therefore, we significantly extended SimBlock as follows:
\begin{description}

\item \textbf{Fine-grained difficulty }Since Java's float data type is limited to 64 bits, the accuracy of the difficulty is limited when dealing with a big value of total block generation power.
We bypassed this limitation by encoding only the relative block generation power of each node instead of the total block generation power.
\smallskip
\item[Gossip protocol variants] To account for various deployments, beyond Bitcoin and altcoins, such as Monero and Cardano, we additionally implemented the hybrid and direct push protocols and revised the low-bandwidth implementation of the compact block relay protocols to incorporate the transmissions of the associated transactions (see Figure~\ref{fig:gossip}). Namely, SimBlock did not originally account for the delays introduced for the transmission of transactions ---
these were shown to have a significant impact on the performance of the network~\cite{DBLP:conf/ndss/0003ZWWXP22}.
\smallskip
\item[Network-layer attacks] We further extended SimBlock to capture a collection of realistic attacks stemming from the network layer.
For this purpose, we introduce a number of adversarial nodes in our network that could delay their block-propagation messages.
\smallskip

\item[Memory and event timeline optimization] In order to handle tasks that may need to be removed from the timeline, the original code kept a map between tasks and their respective events in the timeline. When a task had to be removed, the task's event would be retrieved from the map before being removed from both the timeline and the map. In our simulator, we reduced the scope of this map to only tasks that can be removed, i.e., block generation tasks, effectively reducing its size by more than 95\%. We also optimized the implementation of the priority queue to only include events that are unlikely to be immediately removed from the queue.

\end{description}

\begin{figure*}
	\begin{subfigure}[t]{0.326\textwidth}
		\includegraphics[width=\linewidth]{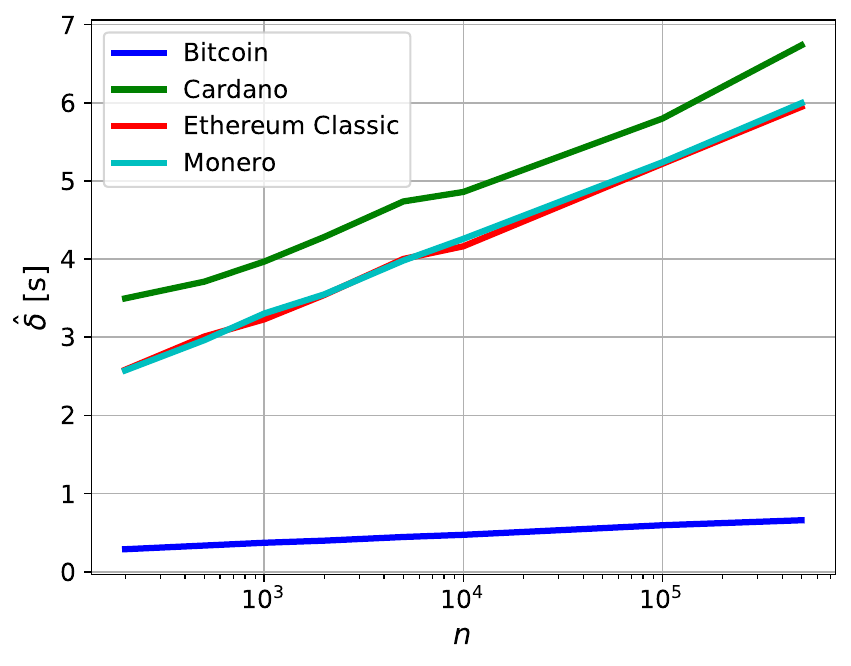}
		\caption{\avgdelay against \nodes.}
		\label{fig:uniform_mean_benign}
	\end{subfigure}
	\begin{subfigure}[t]{0.33\textwidth}
		\includegraphics[width=\linewidth]{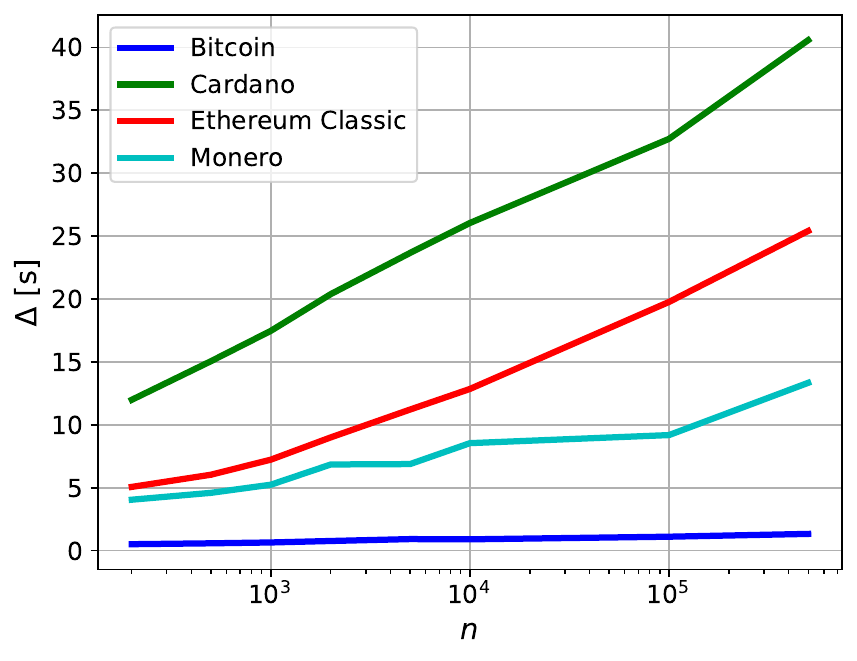}
		\caption{\maxdelay against \nodes.}
		\label{fig:uniform_max_benign}
	\end{subfigure}
	\begin{subfigure}[t]{0.32\textwidth}
			\vspace{-4.5cm}
	\scalebox{0.75}{{\color{red}}
			\begin{tabular}{cl}
				\toprule
				Cryptocurrency & Function\\
				\midrule
				\multirow{2}*{Bitcoin}&$\maxdelay(\nodes)=\textbf{0.1}\cdot log(\nodes)-0.04$\\
				&$\avgdelay(\nodes)=\textbf{0.04}\cdot log(\nodes)+0.03$\\
				\midrule
				\multirow{2}*{Cardano}&$\maxdelay(\nodes)=\textbf{3.65}\cdot log(\nodes)-7.37$\\
				&$\avgdelay(\nodes)=\textbf{0.41}\cdot log(\nodes)+1.3$\\
				\midrule
				\multirow{2}*{Ethereum Classic}&$\maxdelay(\nodes)=\textbf{2.6}\cdot log(\nodes)-8.71$\\
				&$\avgdelay(\nodes)=\textbf{0.42}\cdot log(\nodes)+0.3$\\
				\midrule
				\multirow{2}*{Monero}&$\maxdelay(\nodes)=\textbf{1.18}\cdot log(\nodes)-2.24$\\
				&$\avgdelay(\nodes)=\textbf{0.43}\cdot log(\nodes)+0.26$\\
				\bottomrule
			\end{tabular}
	}
	\caption{Linear interpolation between $\delay$, $\maxdelay$ (in seconds), and $log(\nodes)$.}
	\label{tab:regression}

	\end{subfigure}
	\hfill
	\caption{Impact of \nodes on \maxdelay and the average delay \avgdelay (in the benign case) using various gossip protocols.}
	\label{fig:delay_n}
\end{figure*}

\paragraph{Supported blockchains}
Our experiments supported the following popular Nakamoto-style blockchains.
To remove any bias incurred by the various popularity of these blockchains (e.g., Bitcoin has far more transactions than Cardano), we consider a fixed block size of 800~KB; this corresponds to the average block size of Bitcoin in 2022.  Unless otherwise specified, we only consider nodes that are actively mining (except for \Cref{tab:zn} where we investigate the impact of \zerominers). Therefore,  our evaluations emulate the realistic scenario where mining pools enjoy dedicated networks between each other (although still restricted to the maximum number of neighbors defined by their respective protocols). We additionally evaluate a ``high-speed`` network connectivity option among mining pools to capture the current setup used by Bitcoin.

\begin{table*}
	\footnotesize
	\centering
		\begin{tabular}{|l|r|c|c|l|}
			\hline
			 & Block interval & Avg. block size  & Max. block size & Gossip protocol\\
			\hline
			Bitcoin~\cite{nakamoto2009bitcoin} & $600$s & 1.7 MB & 2 MB & Low-bandwidth CBR\\
			Ethereum Classic~\cite{ethclassic} & $13$s & 1.2 KB & 30 Mgas & Hybrid push\\
			Monero~\cite{monero} & $120$s & 80 KB & N.A. & Direct push\\
			Cardano~\cite{cardano} & $20$s & 31 KB & 90 KB & Advertisement-based\\
			\hline
		\end{tabular}

	\caption{Summary of the main differences among the studied Nakamoto-style blockchains. The average blocksize was retrieved from~\cite{blocksize}, the remaining parameters are derived from the respective documentation and codebase.}
	\label{tab:params}
\end{table*}

	\paragraph{Bitcoin} Bitcoin emerged as the first real-world deployment of Nakamoto-style consensus. Blocks are generated with a rate of one block every ten minutes on average.
	To disseminate information among nodes, Bitcoin currently relies on the so-called ``compact blocks" mode~\cite{BIP_152} depicted in subplot~(d) of Figure~\ref{fig:gossip}.
	In this work, we focus on the low-bandwidth mode of the compact block mechanism.
	Here, blocks in this mode are sent in a compact format, by including references to the confirmed transactions therein.
	As the reference merely consists of transaction hashes, blocks are compact in size.

	\paragraph{Monero}
	Monero is one of the first altcoin forks of Bitcoin that focused on providing a high degree of anonymity and privacy to the users of the blockchain.
	Here, blocks are propagated by \txtnodes directly to their neighbors. This happens irrespective of whether the recipient has received the block in question in a previous interaction.
	In comparison to \adv propagation, this reduces the number of total messages sent as well as the latency it takes a node to receive a block for the first time.
	However, it comes at the cost of nodes superfluously receiving the same blocks multiple times, which can increase the total overhead in the system.
	Although Monero also provides support for ``compact blocks'' in the high-bandwidth mode, its adoption is not clear, hence we focus on the default direct push mode in this paper.
	Direct push is depicted in subplot~(b) in Figure~\ref{fig:gossip} in the Appendix.

	\paragraph{Cardano} Cardano mostly relies on an \adv push model to propagate transactions and blocks among nodes (this was the initial gossip layer used in Bitcoin).
	Here, upon the receipt of a fresh block, each \txtnode advertises the block to all its outgoing neighbors by sending them an \textit{inv} message.
	This message advertises the knowledge of a new block in the platform.	When a node receives a block advertisement of a new block for the first time, the node explicitly asks to receive the block in question by issuing a \textit{getdata} message.
	Subplot~(a) in Figure~\ref{fig:gossip} depicts the \adv propagation mechanism.

	\paragraph{Ethereum Classic}
	To disseminate information among nodes, Ethereum Classic uses a hybrid approach, combining both aforementioned gossip protocols, to propagate freshly mined blocks to its partaking \txtnodes.
	First, it chooses a subset of its neighbors, e.g., of size the square root of the number of its neighbors (as defined in Ethereum), to whom it directly pushes the block using a \textit{NewBlockMsg} message.
	For the remaining neighbors, the node relies on \adv push to disseminate the block (cf. subplot~(c) of Figure~\ref{fig:gossip}).

Table~\ref{tab:params} summarizes the various differences between the considered Nakamoto-style blockchains.

\subsection{Delays in Large-Scale Blockchains} \label{sec:real_world_delays}

We now proceed to empirically evaluate the block propagation delays in the benign setting (where there are no network-layer attacks) when varying the number \nodes of nodes, the block size, and the gossip protocol.
Here, we emulate the propagation of 100 blocks in networks of size $\nodes \in [10, \num{500000}]$ for each of the deployment instances that we consider i.e., Bitcoin, Ethereum Classic, Monero, and Cardano. 

Our results are summarized in \cref{fig:delay_n}.
When scaling up the network, all considered blockchains exhibit a linear increase in the delays on the semi-log plot in \cref{fig:delay_n}, hence confirming the logarithmic correlation between the maximum delay and the network size \nodes.
Cardano, Monero, and Ethereum Classic tend to have a similar scaling factor for their mean delay \avgdelay, although the slope of the maximal latency exhibited by Cardano is steeper than the others. In contrast, Bitcoin demonstrates significantly lower \avgdelay and \maxdelay, with a correspondingly smaller scaling factor, mostly owing to the reliance on compact blocks.

Based on our empirical result, we built a linear regression between $log(\nodes)$ and $\avgdelay(\nodes)$, $\maxdelay(\nodes)$. To do so, we searched for $\alpha$ and $y_0$ such that $y=\alpha\cdot x + y_0$ minimizes the mean square error using the empirically measured data points. Our results are listed in \cref{tab:regression}. Our results indicate that Bitcoin yields the lowest maximum delays when compared to all other investigated deployments. Notice that the low bandwidth mode of \cbr in Bitcoin behaves similarly\footnote{The slope of \push is 10 times larger than that witnessed by \cbr as shown in Figure~\ref{tab:regression}.} to the \push protocol by Monero (both gossip protocols are very similar albeit with different message sizes); equally, the low bandwidth mode of \cbr behaves similarly to the \adv gossip protocol of Cardano (both protocols are very similar albeit with different message sizes). 

Due to lack of space, we include additional results in Appendix~\ref{ap:gossip_protcols}. Our results suggest that the compact block relay modes emerges as an attractive option to minimize delays as the network grows in size (cf. Figures~\ref{fig:uniform_mean_benign} - \ref{fig:uniform_max_benign}).

\begin{table*}
	\footnotesize
	\centering
	\scalebox{0.95}{
	\begin{tabular}{ |c|c|c|c|c|c|c|c|c|c| }
			\hline
			& Number of&\multicolumn{8}{c|}{Number of mining nodes \nodes}\\
			\cline{3-10}
		 &Zero-power nodes& 200&500&\num{1000}&\num{2000}&\num{5000}&\num{10000}&\num{100000}&\num{500000}\\
		 \hline
		 \multirow{2}{*}{Bitcoin}&\num{5000}&0.84s&1.01s&1.12s&1.20s&1.17s&1.14s&1.13s&1.24s\\
		 &\num{100000}&1.07s&1.34s&1.47s&1.56s&1.73s&1.83s&1.86s&1.86s\\
		 \hline
		 \multirow{2}{*}{Cardano}&\num{5000}&13.86s&15.57s&17.10s&18.55s&20.54s&22.78s&26.60s&31.43s\\
		 &\num{100000}&21.82s&16.64s&16.55s&17.73s&19.11s&20.42s&22.93s&26.33s\\
		 \hline
		 {Ethereum }&\num{5000}&7.79s&9.39s&10.73s&11.66s&13.01s&14.32s&17.03s&21.32s\\
		 Classic&\num{100000}&13.83s&10.14s&11.08s&12.70s&14.40s&16.04s&18.07s&20.88s\\
		 \hline
		 \multirow{2}{*}{Monero}&\num{5000}&8.86s&10.88s&11.69s&11.25s&10.28s&10.61s&10.45s&11.59s\\
		 &\num{100000}&10.06s&10.30s&12.31s&14.31s&16.30s&18.07s&18.61s&18.54s\\
		 \hline
	\end{tabular}}
	\caption{Impact of \txtzps on \maxdelay in seconds.}
	\label{tab:zn}
\end{table*}

Finally, we analyzed the impact of zero-power nodes, i.e., nodes that maintain a complete copy of the deployment's ledger but are not actively participating in generating blocks in~\Cref{tab:zn}.
Here, we do not explicitly enforce short connections between \txtminers such that \txtminers constitute the network's core, but instead \txtminers are distributed throughout the entire network randomly.

The impact of \txtzps on the \maxdelay for small values of \nodes is substantial---adding up to approximately 8 seconds to the overall maximum latency witnessed in the system, corresponding to an increase of 56\% for Cardano, 77\% for Ethereum Classic, 13.5\% for Monero, and 25\% for Bitcoin.
On the other hand, for larger values of \nodes, the increase of \txtzps ensures an overall better network connectivity reducing the \maxdelay by 4.86s in Cardano and 0.44s in Ethereum Classic.
For Monero and Bitcoin, increasing the number of \txtzps up to \num{100000} nodes however results in an increase of up 78\% and 50\%, respectively, in \maxdelay.

\begin{figure}
	\centering
	\includegraphics[width=0.67\linewidth]{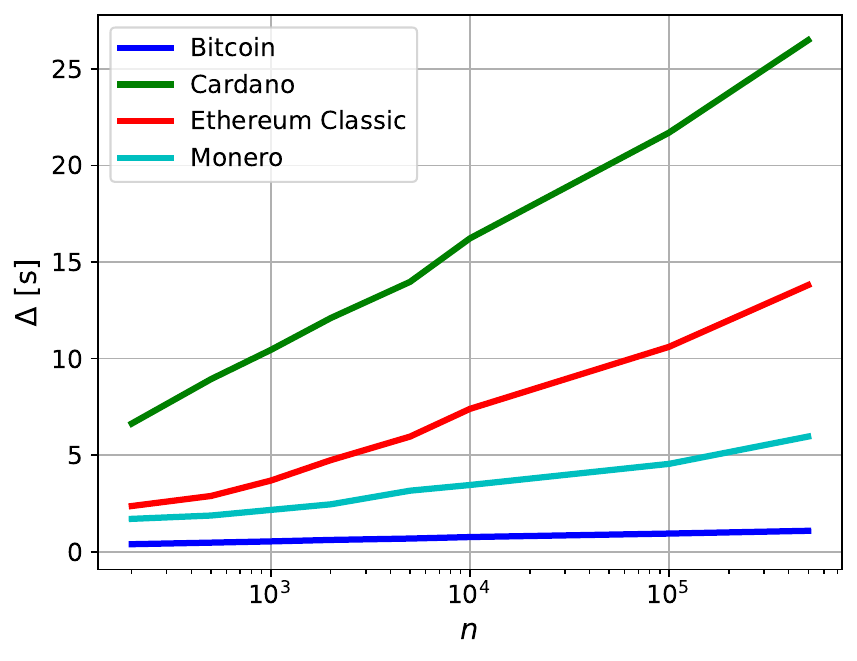}
	\caption{Impact of dedicated high-speed networks among miners on \maxdelay.}
	\label{fig:high_bandwidth_max}
\end{figure}

\begin{table}
\footnotesize
\centering
\setlength\tabcolsep{3pt}
\scalebox{0.95}{\begin{tabular}{ |c|c|c|c|c|c| }
	\hline
	\multirow{3}{*}{} & \multirow{3}{*}{$\constDelay$} & \multicolumn{4}{c|}{Adversarial power $\poweradvmax$} \\
	\cline{3-6}
	 &  & \multicolumn{4}{c|}{Number of \txtminers \numminers} \\
	 \hline
	 && $10^1$ & $10^3$ & $10^6$ & $10^9$ \\
	\hline
	\hline
	\multirow{4}{*}{Bitcoin} & 0~s (no network attacks) & 0.5000 & 0.4999 & 0.4997 & 0.4996 \\
	 & 600~s ($\sim$1~delayed block) & 0.3819 & 0.3819 & 0.3817 & 0.3816 \\
	 & \num{3000}~s ($\sim$5~delayed blocks) & 0.1615 & 0.1615 & 0.1614 & 0.1614 \\
	 & 1Gb/s Dedicated network& 0.5000 & 0.4999 & 0.4998 & 0.4997\\

	\hline
	\multirow{4}{*}{Monero} & 0~s (no network attacks) & 0.4995 & 0.4938 & 0.4854 & 0.4769 \\
	 & 120~s ($\sim$1~delayed block) & 0.3815 & 0.3768 & 0.3698 & 0.3630 \\
	 & 600~s ($\sim$5~delayed blocks) & 0.1614 & 0.1603 & 0.1586 & 0.1570 \\
	 	 & 1Gb/s Dedicated network& 0.4999 & 0.4973 & 0.4934 & 0.4897\\
	\hline
	\multirow{4}{*}{Cardano} & 0~s (no network attacks) & 0.4935 & 0.3935 & 0.2820 & 0.2135 \\
	 & 20~s ($\sim$1~delayed block)& 0.3765 & 0.3010 & 0.2251 & 0.1776 \\
	 & 100~s ($\sim$5~delayed blocks) & 0.1602 & 0.1417 & 0.1206 & 0.1049 \\
	 	 & 1Gb/s Dedicated network& 0.5000 & 0.4344 & 0.3416 & 0.2734\\
	\hline
	\multirow{4}{*}{\shortstack[c]{Ethereum \\ Classic}} & 0~s (no network attacks) & 0.5000 & 0.4155 & 0.2896 & 0.2142 \\
	 & 13~s ($\sim$1~delayed block) & 0.4042 & 0.3166 & 0.2302 & 0.1782 \\
	 & 65~s ($\sim$5~delayed blocks) & 0.1668 & 0.1456 & 0.1222 & 0.1051 \\
	 & 1Gb/s Dedicated network& 0.5000 & 0.4560 & 0.3703 & 0.3026\\
	\hline
\end{tabular}}
\caption{Maximum adversarial power that can be tolerated in real-world deployments w.r.t. the number of nodes.}
\label{tab:HMA}
\end{table}

\begin{figure*}[htbp]
    \centering
    \begin{minipage}[b]{0.63\linewidth}
        \centering
        \begin{subfigure}[t]{0.49\textwidth}
		\includegraphics[width=\linewidth]{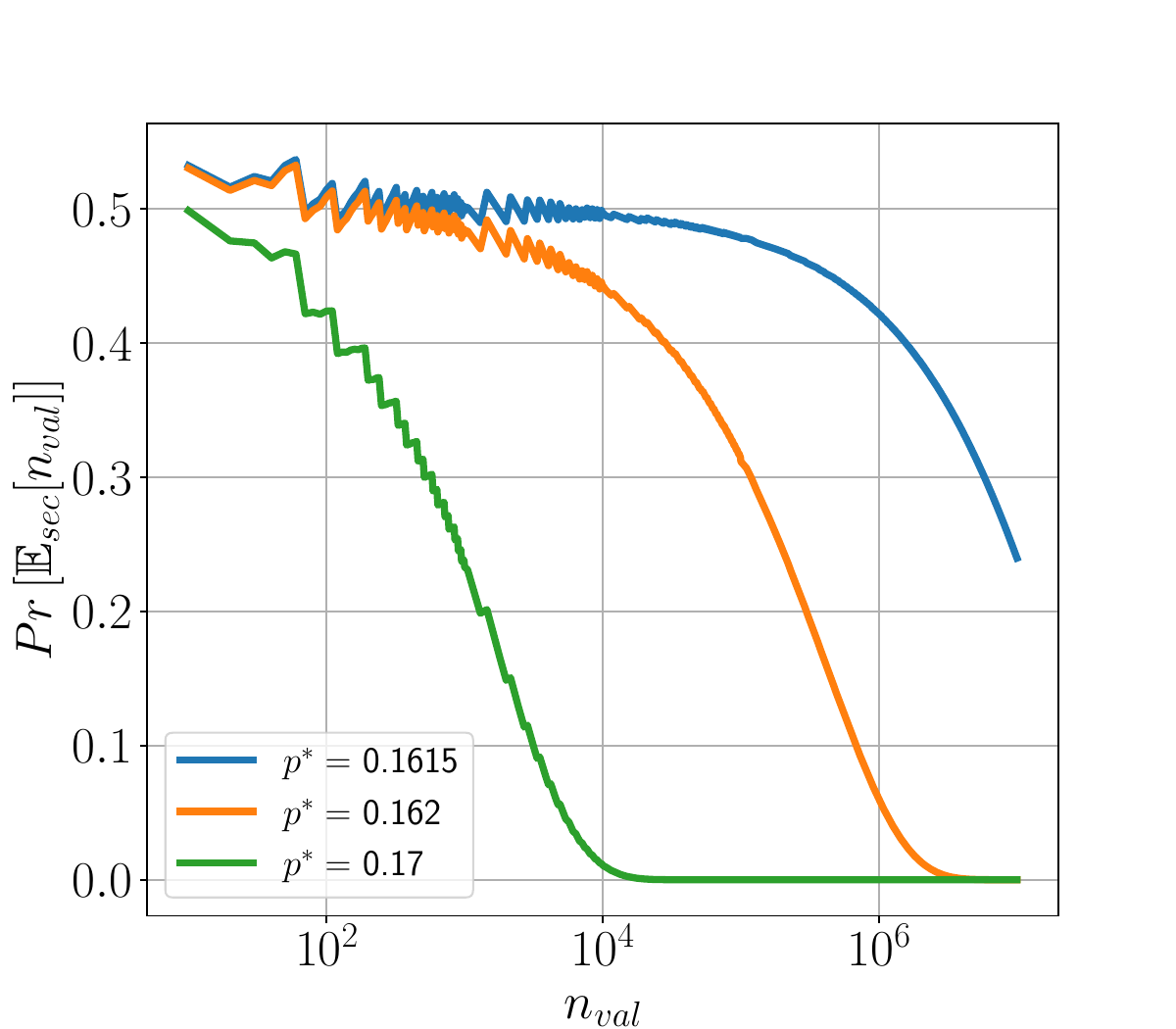}
		\caption{The case of {Bitcoin}.}
		\label{fig:BC_theoretical}
	\end{subfigure}
	\begin{subfigure}[t]{0.49\textwidth}
		\includegraphics[width=\linewidth]{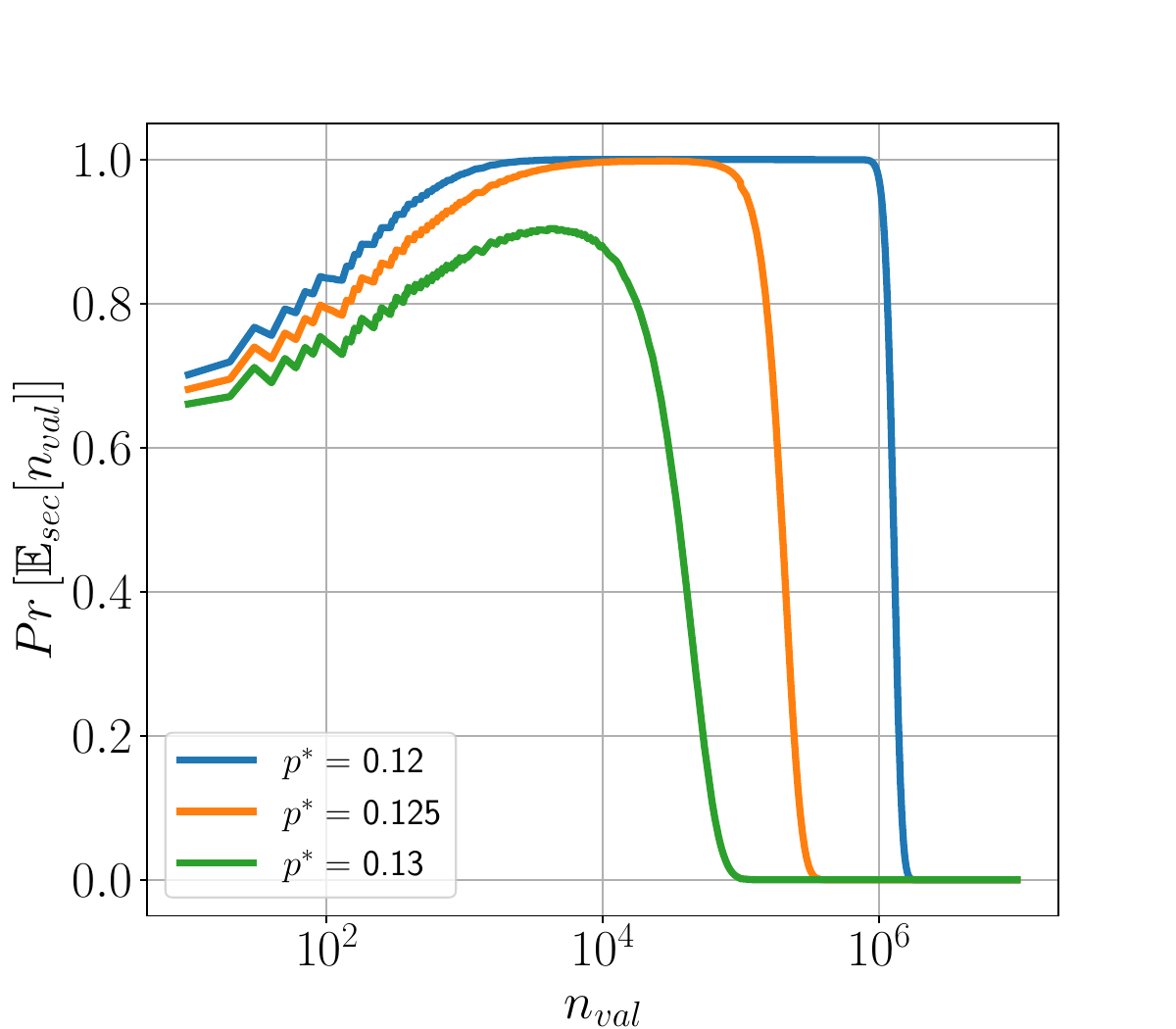}
		\caption{The case of {Cardano}.}
		\label{fig:CA_theoretical}
	\end{subfigure}
	\begin{subfigure}[t]{0.49\textwidth}
		\includegraphics[width=\linewidth]{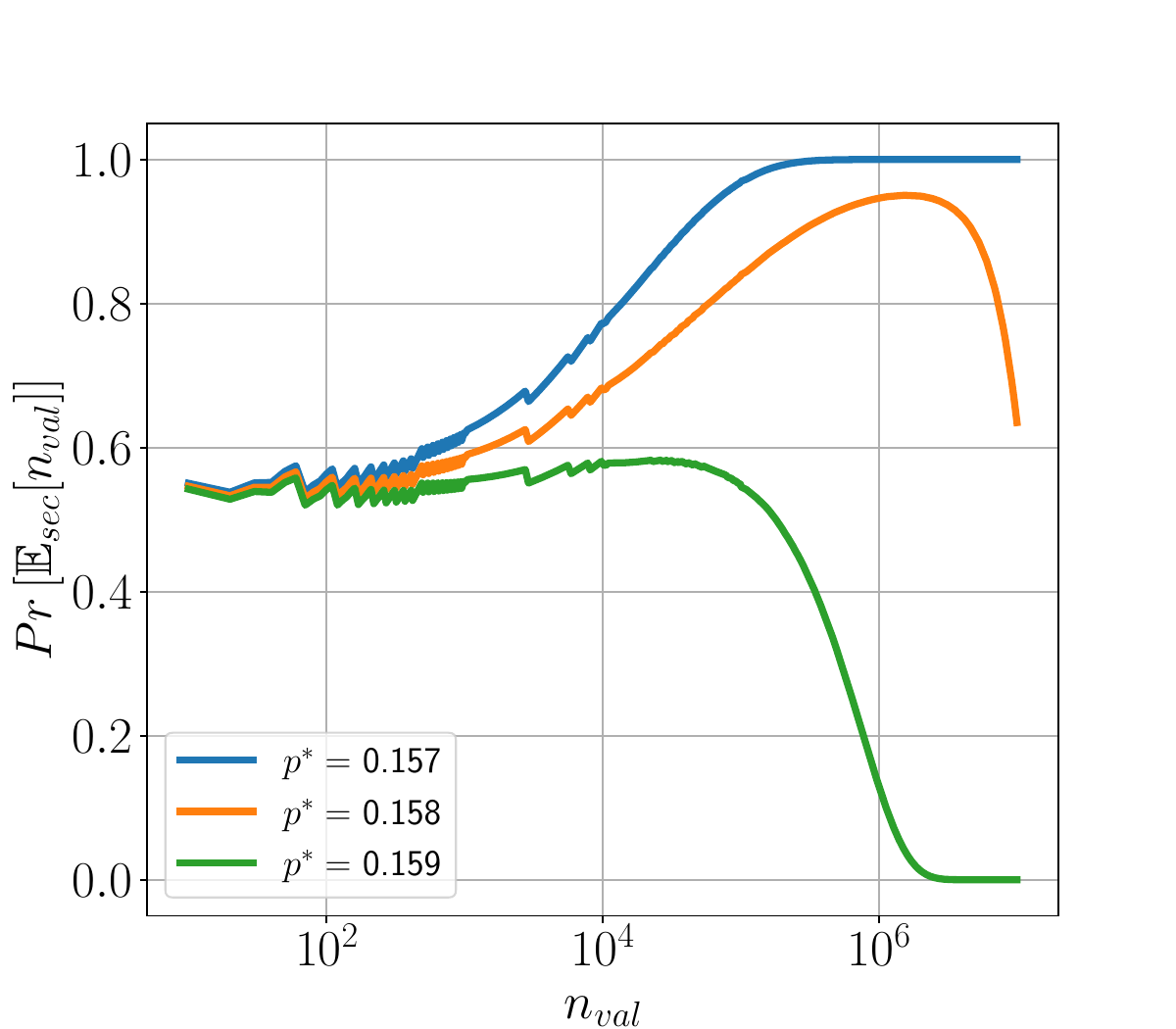}
		\caption{The case of {Monero}.}
		\label{fig:MO_theoretical}
	\end{subfigure}
	\begin{subfigure}[t]{0.49\textwidth}
		\includegraphics[width=\linewidth]{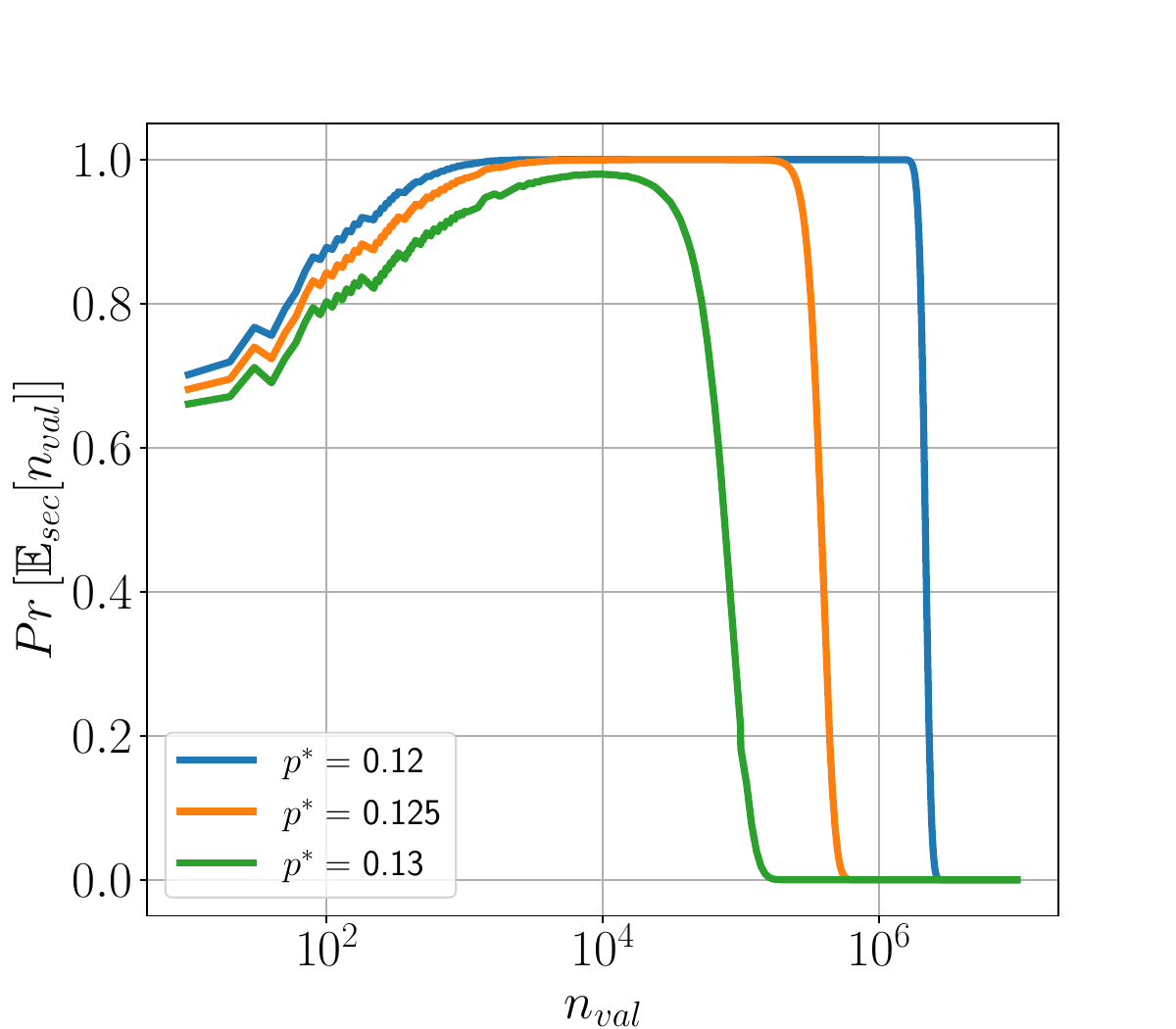}
		\caption{The case of {Ethereum Classic}.}
		\label{fig:ET_theoretical}
	\end{subfigure}
	\caption{Evaluating the probability $\prob{\secevent[\numminers,\numzp]}$ that the blockchain provides persistence and liveness depending on the network's size. Here, we assume an adversary that can selectively delay at most 5 consecutive blocks among a fraction of nodes.}
	\label{fig:theoretical_bounds}
    \end{minipage}
    \hspace{0.02\linewidth} 
    \begin{minipage}[b]{0.309\linewidth}
        \centering
        \begin{subfigure}[t]{\textwidth}
		\includegraphics[width=\linewidth]{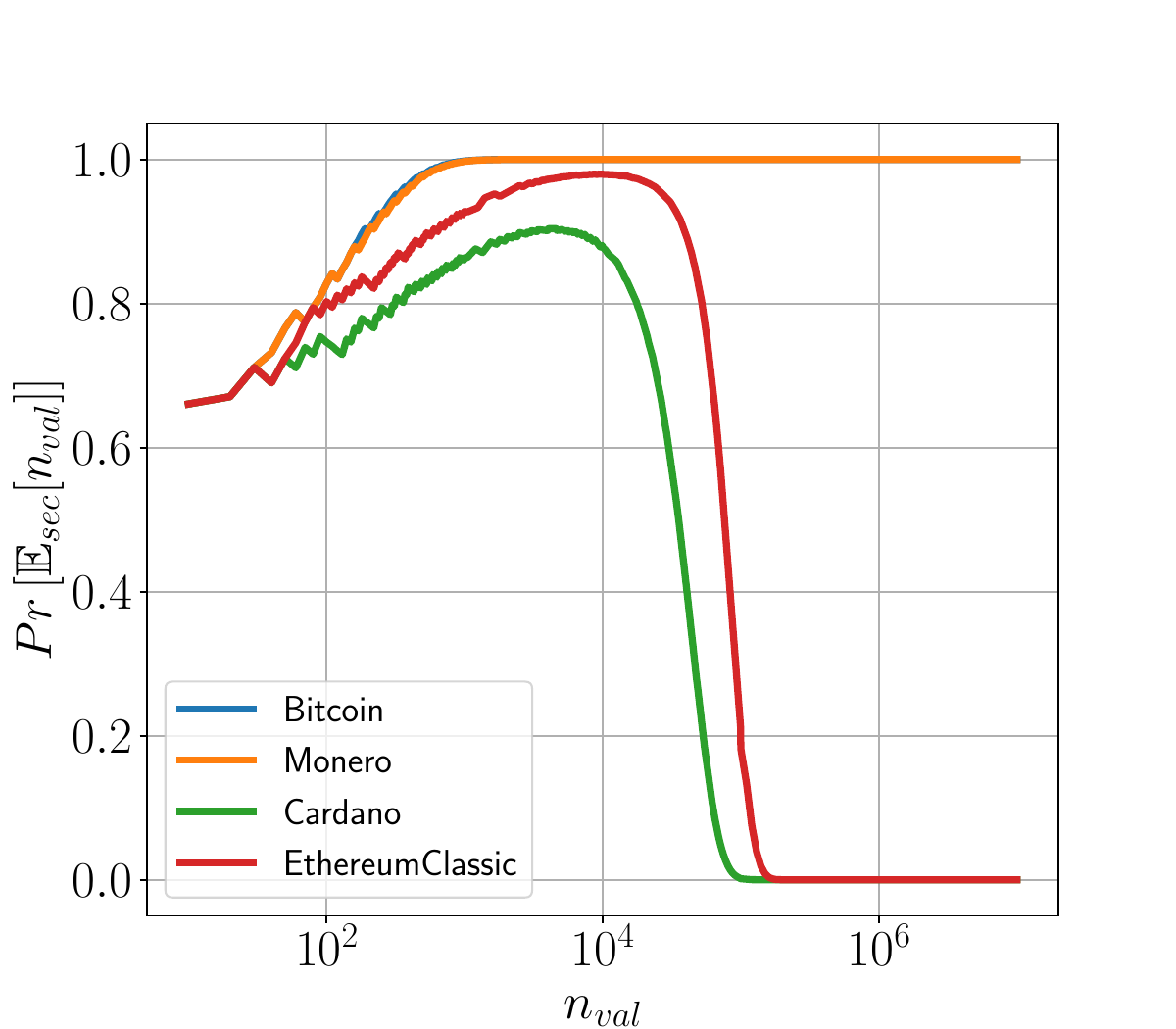}
		\caption{The case of $\EV = 0.13$.}
		\label{fig:theoretical_bound_13}
	\end{subfigure}
	\begin{subfigure}[t]{\textwidth}
		\includegraphics[width=\linewidth]{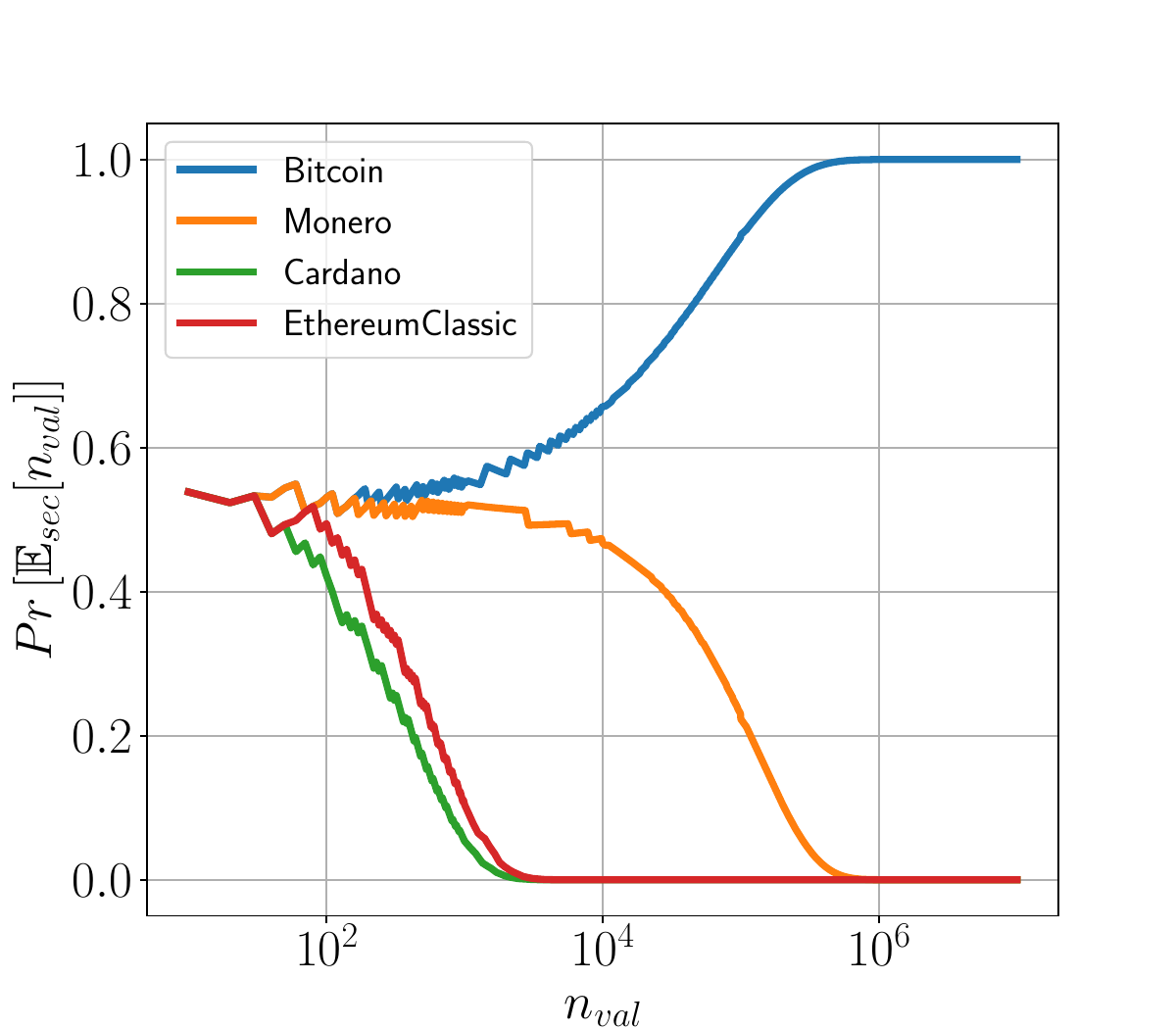}
		\caption{The case of $\EV = 0.16$.}
		\label{fig:theoretical_bound_16}
	\end{subfigure}
        \caption{Evaluating  $\prob{\secevent[\numminers,\numzp]}$ for Bitcoin, Monero, Cardano, and Ethereum Classic when  $\EV = 0.13$ and $0.16$.}
        \label{fig:theoretical_bounds_comp}
    \end{minipage}
\end{figure*}

\begin{figure*}[tb]
	\centering
		\begin{minipage}[c]{0.06\textwidth}
			\centering
				\rotatebox{90}{
			\centering
				\fbox{
		\begin{subfigure}[t]{4\textwidth}
			\includegraphics[width=\linewidth,trim={0.1cm 2.0cm 0.1cm 0.1cm},clip]{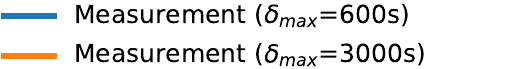}
		\end{subfigure}
		\begin{subfigure}[t]{4\textwidth}
		\includegraphics[width=\linewidth,trim={0.1cm 0.6cm 0.1cm 1.5cm},clip]{legendadv.pdf}
	\end{subfigure}
	}
	}
	\end{minipage}
	\begin{minipage}[c]{0.7\textwidth}
	\centering
	\begin{subfigure}[t]{0.49\textwidth}
		\includegraphics[width=\linewidth]{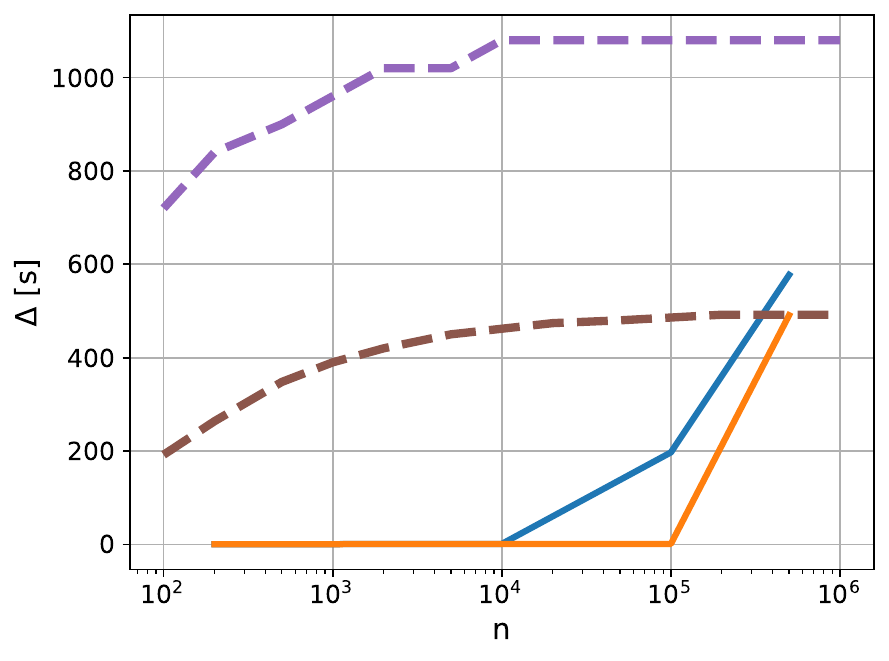}
		\caption{The case of {Bitcoin}. }
		\label{fig:BC_comparison}
	\end{subfigure}
	\begin{subfigure}[t]{0.49\textwidth}
		\includegraphics[width=\linewidth]{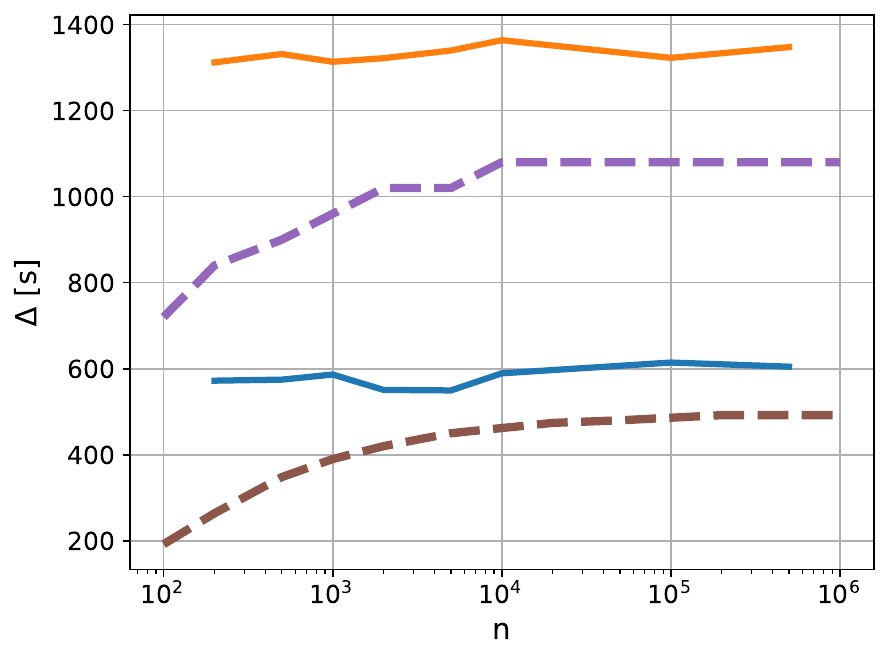}
		\caption{The case of {Cardano}. }
		\label{fig:CA_comparison}
	\end{subfigure}
	\begin{subfigure}[t]{0.49\textwidth}
		\includegraphics[width=\linewidth]{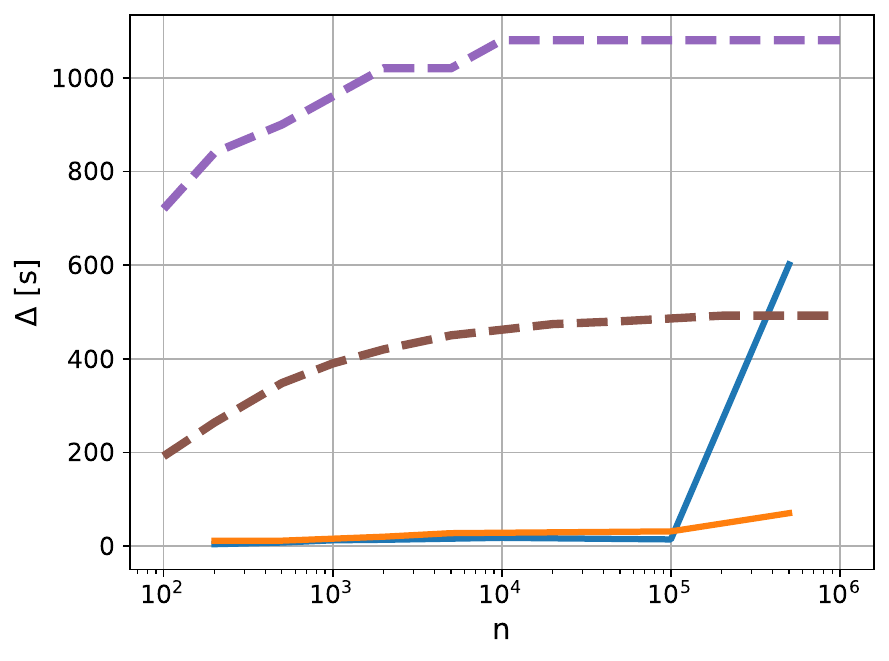}
		\caption{The case of {Monero}. }
		\label{fig:MO_comparison}
	\end{subfigure}
	\begin{subfigure}[t]{0.49\textwidth}
		\includegraphics[width=\linewidth]{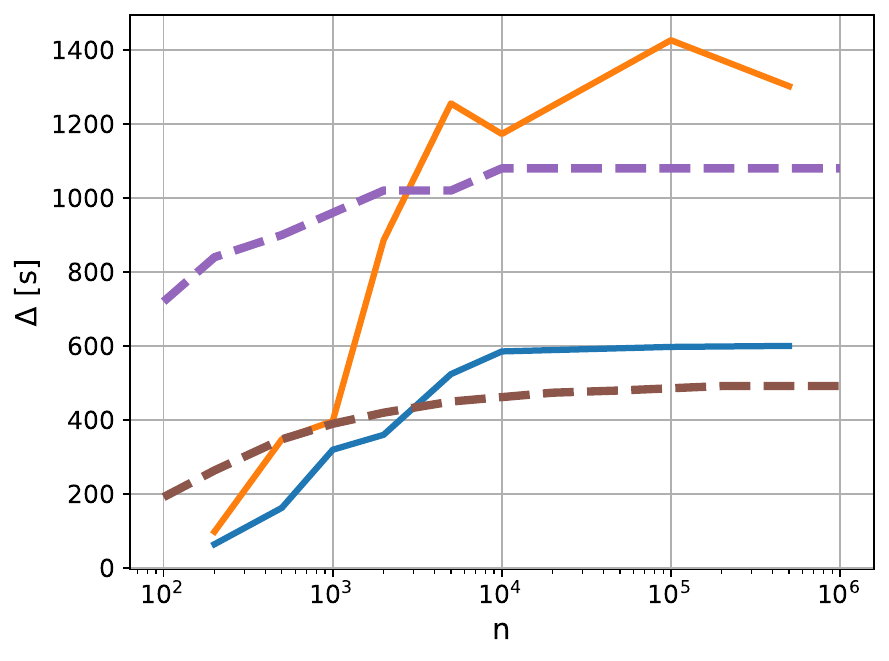}
		\caption{The case of {Ethereum Classic}.}
		\label{fig:EC_comparison}
	\end{subfigure}
	\end{minipage}
	\caption{ Comparing analytical bounds (dashed lines) and empirical measurements (solid lines) obtained for \maxdelay. Here, \EV denotes the probability for a \txtminer being corrupted and with an adversary being present, who controls 15\% of the network's nodes and induces delays on 50\% of his connections of each $nt_{delay}$ seconds (cf.\ \cref{sec:network_attack}).}

	\label{fig:empirical_bounds}

\end{figure*}

\section{Tradeoffs in Real-World Deployments} \label{sec:real_world_analysis}

We now turn to analyzing the impact of the network scale on the security of Nakamoto-style deployments.

\subsection{Fraction of Adversarial Power} \label{sec:real_world_advpower}

We start by investigating the impact of the network size on the fraction of adversarial power that can be tolerated. As mentioned in Section~\ref{sec:gaps}, prior work commonly assumes that the tolerated adversarial power is bounded by a \emph{fixed} value, e.g., $\poweradvmax < \frac{1}{3}$ or $\poweradvmax < \frac{1}{2}$.  However, as shown in Equation~(\ref{eq:adv_bound}), the maximum adversarial power depends on the number of nodes \nodes, the block frequency \rate, and the delay \maxdelay.

Notice that, in real-world settings, the adversary is able to carry out network-layer attacks~\cite{DBLP:conf/ccs/GervaisRKC15, DBLP:conf/uss/HeilmanKZG15} to manipulate the flow of information or even eclipse some nodes. To capture this behavior,
we introduce the term $\constDelay$ for the upper bound on the (adversarially induced) delay that is caused by such attacks. In other words, we approximate $\maxdelay(\nodes)$ with
\begin{equation}
	\maxdelay(\nodes, \constDelay) = \underbrace{\left( \alpha \cdot \log(\nodes)+ y_0 \right)}_{\text{cf.\ \cref{tab:regression} }} + \constDelay \label{eq:adv_delays}
\end{equation}
	 
For instance, to capture the network attacks due to Gervais et al.~\cite{DBLP:conf/ccs/GervaisRKC15} and Heilman et al.~\cite{DBLP:conf/uss/HeilmanKZG15}, $\constDelay$ can be set to 50~mins.\footnote{This corresponds to delaying approximately 5 consecutive blocks with a success rate of~0.4~\cite{DBLP:conf/ccs/GervaisRKC15}.}

In Table~\ref{tab:HMA}, {we report the values of~$\poweradvmax$ for real-world deployments of Nakamoto-style blockchains, namely for Bitcoin, Monero, Cardano, and Ethereum Classic, 
where we estimated~$\poweradvmax$ by plugging the empirical block propagation times we measured for each deployment}
(cf.\ Figure~\ref{tab:regression}) into Equation~(\ref{eq:max_adv_power}). 

Our results confirm that the maximum adversarial power that can be tolerated is strongly influenced by the scale of the network as well as by network-layer attacks. For instance, while Ethereum Classic with $10$ validators can tolerate \poweradvmax of 50.0\%, this bound decreases to $29.0$\% for networks with \num{1000000} validators without any active (network-layer) attack; here, if an attacker is able to manipulate information at the network layer and selectively delay single blocks, the bound decreases to $12.2$\%.
A similar behavior can be seen for Cardano.
{In contrast, the tolerated adversarial power in Bitcoin and Monero appears to be only mildly affected by the network size (see Figure~\ref{fig:delay_n}). This phenomenon shows that the choice of the gossip protocol is crucial for the security of the system.}

	\paragraph{Impact of Dedicated High-Speed Networks}
We now evaluate the impact of having a dedicated high-speed network between miners, similar to the current setup used by mining pools, on the maximum fraction of adversarial power that can be tolerated. To this end, we equip each pool with a 1Gb/s download/upload network with 10 ms latency.
We empirically evaluate the \maxdelay for all studied blockchains in this setting and plot the corresponding results in \Cref{fig:high_bandwidth_max}. Same as before, we also compute the regression of \maxdelay with respect to $log(\nodes)$ analogously to \Cref{tab:regression}.
As expected, a dedicated high-speed network significantly reduces the scaling factor of the \maxdelay in terms of $\log(\nodes)$, up to 20\% for Bitcoin, 30\% for Cardano, 45\% for Ethereum classic, and 54\% for Monero.
Bitcoin exhibits the smallest latency reduction due to the inherently low latency, resulting in the constant block verification time becoming the predominant contributor to the propagation delay instead of the network latency.

While the improvement in \maxdelay has a positive impact on security, it does not completely compensate for the impact of the scale of the network. Namely, for the same number of \nodes, Bitcoin (resp. Cardano, Ethereum Classic, Monero) can tolerate only up to 0.01\% (resp. 1.3\%, 6.01\%, 8.84\%) additional adversarial power $\poweradvmax$ as shown in \Cref{tab:HMA}. For instance, in Cardano and Ethereum classic, even when assuming a high-speed dedicated network, when the number of validator \nodes increases to \num{1000000}, the maximum adversarial power that can be tolerated is only 34.16\%, and  37.03\% respectively.
Note that when the number of miners (aka mining pools) is small (e.g., 10), this fraction is much higher---even when the mining pools do not use a high-speed dedicated network and merely use the underlying gossip layer.

\subsection{Scale vs Security}
\label{sec:network_attack}

In what follows, we evaluate the impact of the network size on $\prob{\secevent[\numminers]}$, the probability that 
{the actually held adversarial power \poweradv is below the maximum tolerated threshold \poweradvmax (i.e., the trust assumption that security guarantees are met).} 

For our analysis, we consider the case where $\constDelay$ is bounded by the time to selectively delay the propagation of up to 5 consecutive blocks~\cite{DBLP:conf/ccs/GervaisRKC15, DBLP:conf/uss/HeilmanKZG15}. We then rely on our empirical measurements to compute the overall delay $\maxdelay(\nodes, \constDelay)$ (cf.\ Equation~(\ref{eq:adv_delays})) and plug the resulting value in Equation~(\ref{eq:success}) to derive the corresponding value for $\prob{\secevent[\numminers]}$. In particular, we first computed $g(n) = \lfloor \poweradvmax\cdot \numminers \rfloor$ as defined by Equation~(\ref{eq:success}) and subsequently determined the security probability as $\prob{\secevent[\numminers]} = F(g(n); n, \EV)$ using the cumulative distribution function $F(k; n, p)$ of the binomial distribution.\footnote{
For instance, to derive the plot for $\EV = 0.17$ from \cref{fig:BC_theoretical}, we computed 
\begin{equation}
	g(n) = \left\lfloor \nodes \cdot \left( \frac{\varMainTheorem + 1 + e}{2\varMainTheorem} - \sqrt{\left(\frac{\varMainTheorem + 1 + e}{2\varMainTheorem}\right)^2 - \frac{1}{\varMainTheorem}} \right) \right\rfloor
\end{equation} 
using $\varMainTheorem = \frac{1}{600} \cdot \maxdelay(\nodes, \constDelay)$ with $\maxdelay(\nodes, \constDelay) = (0.1 \cdot \log(n) - 0.04) + 300$ (cf.~Equation~(\ref{eq:adv_bound}), \cref{eq:adv_delays}, and \cref{tab:regression}). Then, we evaluated $F(g(n); n, 0.17)$ over $\nodes \in [10, 10^7]$.
}

Our results are depicted in Figure~\ref{fig:theoretical_bounds} and Figure~\ref{fig:theoretical_bounds_comp}. We generally observe that an increase in the number of validators \numminers increases the probability of the system being secure until a turnaround value is reached. This turnaround value depends on several parameters:
\begin{itemize}
\item the probability \EV for a \txtminer being corrupted,
\item the maximum delay $\maxdelay$,
\item the block rate \rate.
\end{itemize}

{Our results show that a smaller corruption probability $\EV$ results in higher turnaround values for \numminers. We justify this behavior by observing that it becomes harder for the attacker to gain an amount of power exceeding \poweradvmax as corrupting individual validators gets more difficult.}
{Looking at the the plot for $\EV = 0.157$ in case of Monero, the turnaround point is even outside of the plotted range in Figure~\ref{fig:MO_theoretical}.
On the other hand, if we evaluate a high $\EV \geq 0.16$ in the case of Bitcoin, the turning point is reached already for $\nodes \leq 10$ nodes, and we do not observe an increase in security in Figure~\ref{fig:BC_theoretical} since the turning point is for $n \leq 10$ (which is outside the plotted range).
We finally note that for comparable values of $\EV$, Bitcoin and Monero consistently presents a higher turnaround point compared to Cardano and Ethereum Classic (cf.~\cref{fig:theoretical_bound_13}). Further, Bitcoin provides a higher turnaround point compared to Monero (cf.~\cref{fig:theoretical_bound_16}). These results confirm that Bitcoin and Monero scale better with respect to the network size compared to Cardano and Ethereum Classic (because of the delays incurred by their respective gossip protocols).

{In the next set of experiments we compare the analytical upper bound on the maximum delay \maxdelay with empirically measured delays in common deployments. The results are shown in Figure~\ref{fig:empirical_bounds}.
To this end, we determine the maximum delay $\maxdelay_{max}$ fulfilling the condition in Equation~(\ref{eq:success}), as follows:
\begin{equation*}
	\maxdelay_{max} := \max \{\maxdelay \in \mathbb{R}\ \vert \ \prob{\secevent[\numminers,\numzp]} \geq 0.9 \}
\end{equation*}
}

Here, in our experiments we assume that the adversary can compromise each node with probability 15\% to execute network attacks.
Further, each such malicious node can delay its messages to a fraction of 50\% of its connected neighbors by at most 10 and 50 minutes respectively. We include the evaluation results pertaining to various network attack strategies in Appendix~\ref{ap:gossip_protcols}.

{We observe that Bitcoin and Monero are only mildly affected by network delays thanks to their usage of \cbr and \push, which makes them robust against our strongest adversaries of $\EV = 0.4$ for smaller networks (\nodes $\leq$ \num{300000}), and $\EV = 0.3$ for larger networks (\nodes $>$ \num{300000}). }
On the other hand, we observe that Cardano's \adv protocol is very susceptible to network delays: after selecting an inv message to retrieve the block from, nodes wait for a timeout of 10 minutes before requesting the block from another node. Therefore, even for a network delay of 600s with \nodes = 200, Cardano is not resilient to the $\EV=0.4$ adversary and is further susceptible to the $\EV=0.3$ adversary when the network delay increases to 50 minutes.
Lastly, we notice that while Ethereum Classic's \hybrid provides good performance for small networks, security degrades as \nodes grows, making it similarly susceptible to the $\EV=0.4$  and $\EV=0.3$ adversaries when \nodes$\geq$\num{5000} for a delay of 10 minutes and 50 minutes respectively.

\section{Related Work}

The honest-majority assumption, initially formulated as an intuitive requirement for the security of Bitcoin, has been gradually refined in a series of theoretical works seeking to precisely capture the impact of network delays on consistency~\cite{DBLP:conf/eurocrypt/GarayKL15, DBLP:journals/iacr/KifferRS22, DBLP:conf/icdcs/ZhaoTLWLX20, DBLP:journals/corr/abs-2203-06357, DBLP:conf/ccs/DemboKTTVWZ20}, leading to more expressive (and tighter) security conditions for Nakamoto-style blockchains (see Table~\ref{tab:conditions} for a detailed comparison).

Independently from these models, a number of contributions reported practical attacks by large-scale adversaries on Nakamoto-style deployments~\cite{DBLP:conf/uss/HeilmanKZG15} and even by resource-constrained adversaries~\cite{DBLP:conf/ccs/GervaisRKC15}. For instance, in~\cite{DBLP:conf/sp/ApostolakiZV17}, it was shown that routing attacks can be leveraged to isolate a set of nodes and partition the network or to also delay the block propagation in Bitcoin. These attacks allow the adversary to arbitrarily delay the propagation of blocks and transactions in the network by dozens of minutes. In this work, we show that such attacks could easily violate the consistency bounds in a number of existing blockchain deployments.

	Gervais \emph{et al.}~\cite{DBLP:conf/ccs/GervaisKWGRC16} investigated the security and performance of PoW blockchains with respect to various consensus and network parameters. 
	More precisely, they devised optimal strategies for double-spending and selfish mining attacks as a means to quantify robustness against these attacks with respect to the attained overall performance.
	
	Zhang \emph{et al.}~\cite{DBLP:conf/ndss/0003ZWWXP22} investigated tradeoffs between security and performance exhibited by the compact block relay gossip protocol~\cite{BIP_152}. One interesting finding is that fresh transactions, i.e., transactions that are already incorporated in a block but have not been already received by other nodes, increase the effective delay it takes to propagate a newly-mined block in the network (and therefore reduce the overall performance).

Finally, a series of contributions~\cite{DBLP:journals/iacr/GaziKR23,DBLP:journals/corr/abs-2110-08673,DBLP:conf/ccs/DavidM0NT22} have investigated the security of the committee selection in (non-Nakamoto-style) proof-of-stake blockchains. 
{In this vein, Gazi \emph{et al.}~\cite{DBLP:journals/iacr/GaziKR23} proposed a novel committee-selection scheme that leverages specific stake distributions to improve the tradeoff between security and committee size---here, \emph{security} means that the fraction of corrupted validators in each committee should not be ``too far'' from the overall adversarial power. 
In particular, all security analyses of committee-based protocols \emph{assume} that the overall adversarial power is below $\poweradv$. In contrast, our work introduces an alternative probabilistic corruption model capturing the intuition that controlling~$\poweradv\cdot \nodes$ validators becomes harder (i.e., less likely) as $\nodes$ grows.

}

\section{Concluding Remarks}\label{sec:takeaways}

In this work, we investigated how the network scale (i.e., the number of nodes~\nodes) affects the security of Nakamoto-style blockchains in real-world deployments. We specifically analyzed, both theoretically and empirically, how increasing decentralization (i.e., larger \nodes) can result in higher delays (\maxdelay) and higher resilience to Byzantine validators (\poweradvmax). By establishing precise relations between \nodes, \maxdelay and \poweradvmax, we believe that our results lay clear foundations to better explore security vs performance vs decentralization tradeoffs in Nakamoto-style blockchain deployments. 

From an analytical perspective, we observe that while \maxdelay only grows logarithmically in \nodes, the probability of violating the honest-majority assumption decreases exponentially in \nodes. This suggests that, at least asymptotically, decentralization almost always improves security.

Our empirical results show however that in practice, network delays can increase dramatically with \nodes depending on the gossip protocol in use. For instance, Nakamoto-style deployments using low-bandwidth CBR and direct push (i.e., Bitcoin and Monero) appear to be considerably more robust than those relying on hybrid-push and advertisement-based gossip protocols (Ethereum Classic and Cardano) (cf.~\cref{tab:regression} and \cref{fig:empirical_bounds}). In particular, our results indicate that the choice of the gossip protocol is crucial: 
	{For instance, \cref{tab:HMA} shows that both Cardano and Ethereum Classic are vulnerable to an adversary that controls more than 29\% of the total network's power in networks comprising a million nodes. On the contrary, due to their gossip protocols, Bitcoin as well as Monero provide 
	 high security, tolerating a 47\%, resp.~49\% adversary, even in larger networks of up $10^9$ nodes.}

Our preliminary results also indicate that the compact block propagation mode can promise some robustness in spite of a network adversary that can compromise a fraction of the underlying network. We therefore hope that our findings motivate further research in the area.

\section*{Acknowledgment}
This work has been partially funded by the Deutsche Forschungsgemeinschaft (DFG, German Research Foundation)---EXC 2092 (CASA) 39078197---and the European Union (HORIZON-JU-SNS-2022 NANCY project No~101096456 and HORIZON-CL4-2022 Programming Platform for Intelligent Collaborative Deployments over Heterogeneous Edge-IoT
Environments project No 101093069). 
The views and opinions expressed are those of the authors only and do not necessarily reflect those of the European Union or the European Research Council Executive Agency.
Neither the European Union nor the granting authority can be held responsible for them.

\bibliographystyle{IEEEtran}
\bibliography{main_bib}

\appendices
%\section{Additional Proofs}
\section{Proof of Theorem~\ref{th:maxdelay}}\label{ap:thmaxdelay}

	In Nakamoto-style blockchains, nodes randomly join and leave the network and might also change the set of peers over time. That is, the structure of the network results from a random process and the blockchain network can be considered as a random graph~\cite{DBLP:journals/tem/ShahsavariZT22, DBLP:conf/icdcn/CrucianiP23, DBLP:conf/soda/BecchettiCNPT20}. Moreover, we assume that each newly added node establishes up to \outdegree outgoing connections to a set of nodes, where \outdegree is independent of the number of nodes. This is conforming with existing popular Nakamoto-style deployments, such as Bitcoin and Ethereum 1.0.

	The probability $p$ that a fixed node is connected to an arbitrary other node is $\approx\frac{\outdegree}{\nodes}$. A consequence is that $\nodes\cdot p\approx \outdegree$, being a constant $> 1$. 
	
	For the proof, we make use of the following result about the diameter of random graph:
\begin{theorem}[{\cite[Theorem 6]{CHUNG2001257}}]\label{th:diameter}
	Consider a random graph, i.e., a randomly generated graph, of \numnodes \txtnodes in which a pair of
	vertices appears as an edge with probability $p$. Let \diameter denote the diameter of the graph, i.e., the length of the longest shortest path in the graph.
	Suppose that $\nodes \cdot p \geq  c > 1$ for some constant $c$. Then, we have that:
	\begin{eqnarray*}
		\diameter&\geq &(1+o(1))\frac{\log(\nodes)}{\log(n\cdot p)}\ \mathrm{and}\\ 
		\diameter&\leq &\frac{\log(\nodes)}{\log(n\cdot p)}+2\frac{(10c/(\sqrt{c}-1)^2+1)}{c-\log(2c)}\frac{\log(\nodes)}{\log(n\cdot p)}+1.
	\end{eqnarray*}

\end{theorem}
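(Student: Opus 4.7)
The plan is to prove the lower and upper bounds separately, using rather different techniques: the lower bound follows from a first-moment volume argument on BFS neighborhoods, while the upper bound proceeds via a two-phase analysis that first shows BFS neighborhoods grow geometrically and then that any two sufficiently large BFS balls must be linked by a direct edge.

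For the lower bound, fix any vertex $v$ and let $N_k(v)$ denote its BFS ball of radius $k$. A straightforward first-moment bound gives $\mathbb{E}[|N_k(v)|] \leq 1 + (np) + (np)^2 + \cdots + (np)^k$, since each newly reached vertex contributes at most $np$ neighbors in expectation. Markov's inequality then yields $|N_k(v)| = O((np)^k)$ with high probability, and since a diameter of at most $k$ would force $|N_k(v)| \geq n$ for every vertex, we must have $(np)^k = \Omega(n)$, i.e., $k \geq (1+o(1))\log(n)/\log(np)$, which is the claimed lower bound.

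For the upper bound, I would split the argument into two phases. In Phase~1, I would show via a Chernoff--Hoeffding argument on the binomially distributed number of fresh neighbors discovered at each BFS level that, with high probability and simultaneously for every starting vertex $v$, the BFS layers grow by a factor of at least $(1-\varepsilon) np$ as long as $|N_k(v)|$ remains below some threshold $M = n^{1-o(1)}$. The quantitative bookkeeping here is what produces the explicit constant in the theorem: the $(\sqrt{c}-1)^{-2}$ factor reflects the optimal slack $\varepsilon$ in the concentration step, while the $c - \log(2c)$ in the denominator is the positive Chernoff rate function for a Binomial with mean $\geq c > 1$, ensuring that the per-layer failure probability is exponentially small in $|N_k(v)|$ so that the union bound over all $n$ starting vertices and $O(\log n / \log(np))$ layers still goes through. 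After at most $k_1 = \log(M)/\log(np) + O(1)$ steps, every BFS ball exceeds size $M$.

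Phase~2 closes the distance between any two vertices $u, w$: conditioned on two disjoint vertex sets of size at least $M$, the probability that no edge crosses between them is at most $(1-p)^{M^2} \leq \exp(-pM^2)$, which is $o(n^{-2})$ for a suitable choice of $M$, so a union bound over all pairs delivers a direct connecting edge with high probability. Combining the two phases yields a diameter of at most $2k_1 + 1$, and after plugging in the expression for $k_1$ together with the tracked constants from Phase~1 one recovers exactly the stated bound. The main obstacle will be the concentration analysis in Phase~1: BFS layers are not independent because edges revealed at one level are no longer available at subsequent levels. I would handle this via the standard edge-exposure viewpoint, conditioning on the set of already-reached vertices and applying a Chernoff bound to the Binomial count of fresh neighbors sprouting from the current frontier; this keeps the per-layer failure probability geometric in $|N_k|$ and lets the union bound absorb the dependencies without spoiling the explicit constants that appear in the theorem statement.
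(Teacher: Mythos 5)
This statement is imported verbatim from Chung and Lu (their Theorem~6); the paper itself gives no proof of it, so there is no internal argument to compare against and your sketch has to stand on its own. Judged that way, it has a genuine gap in exactly the regime the theorem covers. The hypothesis only requires $np\geq c$ for a constant $c>1$, so $p$ may be as small as $c/n$. In that sparse regime your Phase~1 claim --- that with high probability, simultaneously for every start vertex, each BFS layer grows by a factor $(1-\varepsilon)np$ --- is false: a constant fraction of vertices have degree $0$ or $1$ (a vertex is isolated with probability $(1-p)^{n-1}\approx e^{-c}$), so the per-vertex failure probability of geometric growth is a constant and no union bound over the $n$ start vertices can succeed. ``Exponentially small in $|N_k(v)|$'' buys nothing while $|N_k(v)|$ is itself $O(1)$, which happens for many vertices over many initial layers when $np=O(1)$. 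The real difficulty --- and the actual source of the constant $\bigl(10c/(\sqrt{c}-1)^2+1\bigr)/\bigl(c-\log(2c)\bigr)$ --- is bounding the number of \emph{slow} initial layers a surviving neighborhood can have before geometric growth kicks in; your sketch attributes the constant to Chernoff slack but never confronts this slow-start phenomenon. Phase~2 likewise presupposes that every vertex's ball reaches size $M$, which fails for the many vertices lying in small components, since for $c<\log n$ the graph is disconnected with high probability.

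The lower bound needs repair for the same reason: ``diameter at most $k$ forces $|N_k(v)|\geq n$'' is false when the graph is disconnected, and the theorem must be read (as Chung and Lu do) with the diameter taken over pairs in a common component. The fix is routine --- the giant component has $\Theta(n)$ vertices whp while $\mathbb{E}\,|N_k(v)|=O((np)^k)=o(n)$ for $k=(1-\varepsilon)\log n/\log(np)$, so some giant-component vertex lies outside the radius-$k$ ball of a giant-component vertex $v$ --- but as written the step does not go through. In short, your outline is the standard two-phase argument for the dense regime $np\gg\log n$; to prove the statement in the stated generality you must restrict to the giant component and explicitly control the number of slow initial BFS layers, which is precisely where the quoted constants originate.
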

The following corollary is an immediate consequence of \cref{th:diameter}.
\begin{corollary}\label{cor:diameter}
	Consider the setting explained in \cref{th:diameter}. If $\nodes\cdot p$ is constant, it follows that:
	\begin{equation}
		\diameter\in\Theta(\log(\nodes)).
	\end{equation}
\end{corollary}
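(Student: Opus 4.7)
The plan is to derive \cref{cor:diameter} as a direct specialization of \cref{th:diameter} to the hypothesis $\nodes\cdot p = c$ for a fixed constant $c>1$. The key structural observation is that, once $\nodes\cdot p$ is held constant, the quantity $\log(\nodes\cdot p)=\log(c)$ becomes a positive constant, so the ratio $\tfrac{\log(\nodes)}{\log(\nodes\cdot p)}$ collapses to $\tfrac{1}{\log(c)}\cdot\log(\nodes)$, which is already $\Theta(\log(\nodes))$. The whole argument then reduces to plugging this simplification into both inequalities supplied by \cref{th:diameter} and matching orders of growth.

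First, I would handle the lower bound. The estimate $\diameter \geq (1+o(1))\tfrac{\log(\nodes)}{\log(\nodes\cdot p)}$ becomes $\diameter \geq (1+o(1))\tfrac{\log(\nodes)}{\log(c)}$, which immediately yields $\diameter=\Omega(\log(\nodes))$ since $\log(c)>0$. Next, I would handle the upper bound: the prefactor $2\tfrac{10c/(\sqrt{c}-1)^2+1}{c-\log(2c)}$ depends only on $c$ and is therefore a constant $K$ under the hypothesis; here one silently invokes the condition $c>1$ under which \cref{th:diameter} is stated (sufficient to make $\sqrt{c}-1\neq 0$ and $c-\log(2c)>0$). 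The upper bound then simplifies to $\diameter \leq (1+2K)\tfrac{\log(\nodes)}{\log(c)}+1=O(\log(\nodes))$. Combining the two directions gives $\diameter\in\Theta(\log(\nodes))$, as claimed.

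I do not expect any genuine obstacle: the corollary is a straightforward asymptotic specialization in which every symbol that depends on $c$ (but not on $\nodes$) is absorbed into the hidden constants of the $\Theta$ notation. The only mild care required is in checking that the constants appearing in \cref{th:diameter}'s upper bound are well-defined and strictly positive under the working assumption $c>1$, so that they can legitimately be pulled out of the asymptotics. Once that sanity check is performed, the proof is a one-line substitution.
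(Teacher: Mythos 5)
Your proposal is correct and matches the paper's intent: the paper offers no explicit proof, declaring \cref{cor:diameter} an ``immediate consequence'' of \cref{th:diameter}, and your substitution of a constant $\nodes\cdot p=c>1$ into both bounds (so that $\log(\nodes\cdot p)$ and the prefactor become constants absorbed into the $\Theta$) is precisely that intended one-line argument. Your sanity check on the positivity of the constants is a reasonable extra care, and it is the only place where any detail beyond direct substitution is needed.
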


	Then, we can apply \cref{th:diameter} and \cref{cor:diameter} to derive
	\begin{equation}
		\diameter\in\Theta(\log(\numnodes)).
	\end{equation}
	
	Moreover, the maximum delay \maxdelay is given by the time required for sending a 1-hop-message times the maximum number of hops between any two nodes. As the latter is the value \diameter, it follows that
	\begin{equation}
		\maxdelay=const\cdot \diameter \in const\cdot \Theta(\log(\numnodes))=\Theta(\log(\numnodes)).
	\end{equation}

\begin{figure}
	\centering
	\includegraphics[width=0.67\linewidth]{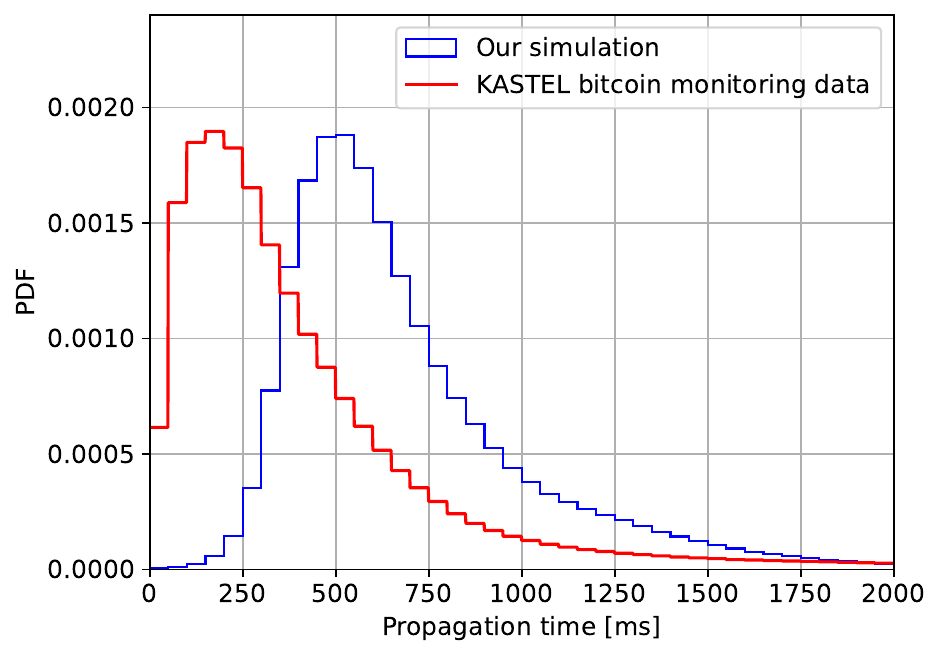}
	\caption{Comparison of the block arrival time in our environment and based on the measured data from KASTEL~\cite{kastel}.}
	\label{fig:calibration}
\end{figure}

\begin{figure*}[tb]
	\centering
	\begin{subfigure}[t]{0.24\textwidth}
		\captionsetup{justification=centering,margin=0.2cm}
		\includegraphics[width=\linewidth]{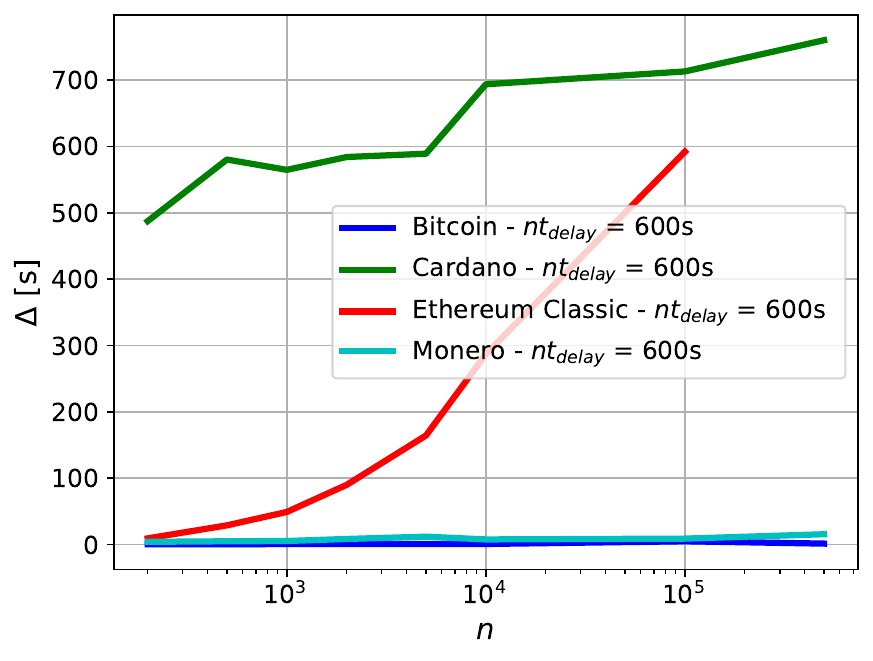}
		\caption{ $\maxdelay(\nodes)$ for $\EV=0.15$, $\ncs=0.1$}
		\label{fig:adv_0.15_0.1}
	\end{subfigure}
	\begin{subfigure}[t]{0.24\textwidth}
		\captionsetup{justification=centering,margin=0.2cm}
		\includegraphics[width=\linewidth]{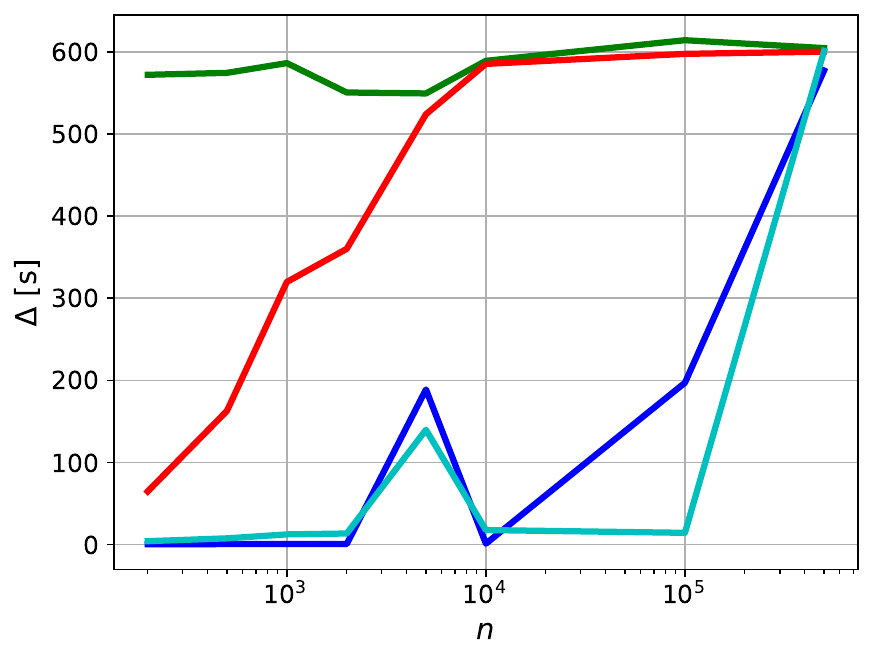}
		\caption{ $\maxdelay(\nodes)$ for $\EV=0.15$, $\ncs=0.5$}
		\label{fig:adv_0.15_0.5}
	\end{subfigure}
	\begin{subfigure}[t]{0.24\textwidth}
		\captionsetup{justification=centering,margin=0.2cm}
		\includegraphics[width=\linewidth]{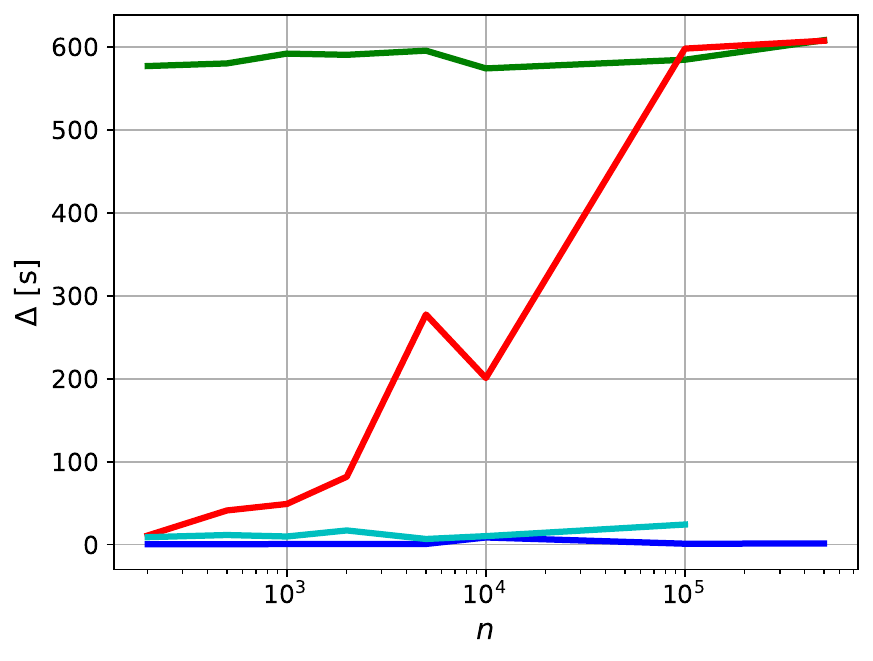}
		\caption{ $\maxdelay(\nodes)$ for $\EV=0.3$, $\ncs=0.1$}
		\label{fig:adv_0.3_0.1}
	\end{subfigure}
	\begin{subfigure}[t]{0.24\textwidth}
		\captionsetup{justification=centering,margin=0.2cm}
		\includegraphics[width=\linewidth]{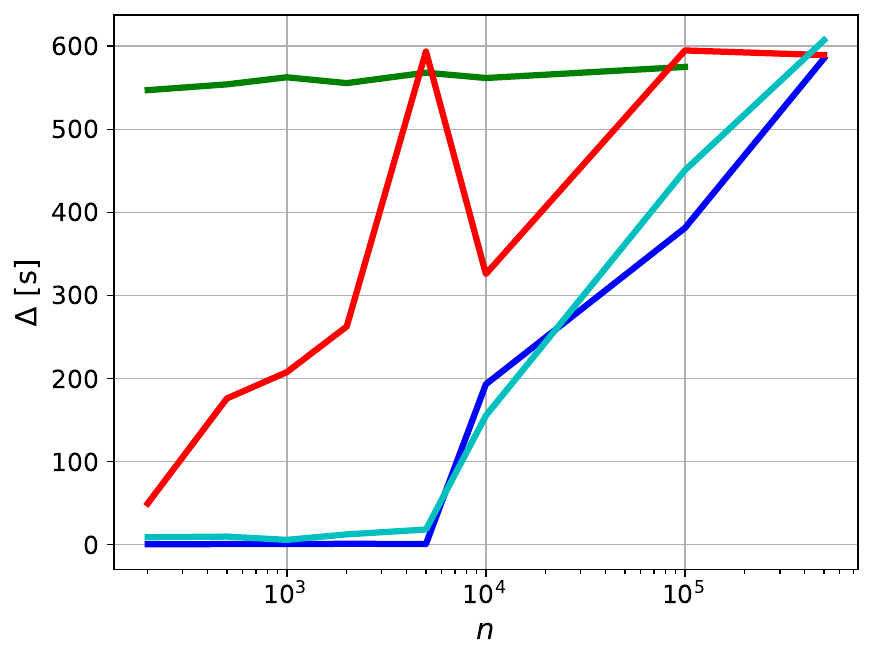}
		\caption{ $\maxdelay(\nodes)$ for $\EV=0.3$, $\ncs=0.5$}
		\label{fig:adv_0.3_0.5}
	\end{subfigure}
	\caption{Impact of network-layer attacks on $\maxdelay$ assuming a uniform distribution of block generation resources.}
	\label{fig:exp1_adversary}
\end{figure*}

\section{Parameter Calibration}\label{ap:calibration}
To ensure a realistic setup of the simulator, we calibrated the network's bandwidth and latency in such a way that it yields comparable block propagation times and stale block rate when compared to the KASTEL dataset~\cite{DBLP:conf/p2p/DeckerW13, kastel}.

The KASTEL dataset contains real-world measurements derived from the Bitcoin network, 
The measurements include the distribution of \textit{inv} messages' arrival times received by a monitoring node for each block sent in the network. 
This latter node was connected to 7000-9000 nodes in the Bitcoin network. 
Delays between a block's first and last received \textit{inv} messages were measured and counted in intervals of one millisecond each. In our work, we relied on the KASTEL datasets measured between 
January and October 2022 when calibrating our simulator.

To calibrate the parameters used in our simulator, we run a grid search to evaluate various parameter configurations and compare our simulator's output against the measured dataset.  
The overall aim of our grid search is to evaluate our simulator using various parameter combinations to yield a configuration that mimics the overall performance measured by the KASTEL dataset. 
To measure the conformity between our simulator's output and the real-world data, we compute the Mean Squared Error (MSE) over the distribution of propagation times.
By minimizing the MSE score, we calibrate our simulator to match the behavior of the real-world Bitcoin blockchain.     
The parameters that we configured in our grid search are as follows: 
\begin{description}
	\item[Upload bandwith] The upload bandwidth is a matrix that determines the distribution of upload bandwidth from nodes in one region to nodes in another region. 
	The unit of this parameter is \textit{bits per seconds}. 
	\item[Download bandwith] Similar to the upload bandwidth, the download bandwidth denotes the region-wise distribution of nodes' download speed.  
	\item[Latency] The latency is also defined as a matrix for latencies between pairs of regions and is defined in the unit of \textit{milliseconds}. 
\end{description}

\cref{fig:calibration} depicts the probability density function (PDF) derived from the KASTEL dataset~\cite{kastel} (in red) and the PDF derived from our simulator (in blue). 
As can be seen, our results follow a similar distribution, shifted to the right.
When comparing the mean propagation delay, our simulator yields 0.704 seconds, while the collected dataset compound to 0.783 seconds.
The main rational behind the shifted PDF is due to the starting point of the measurements: whereas we are able to know the exact time when a block is mined and when it is received by the different nodes, the dataset collected by \cite{kastel} can only setup the initial time to be the time when they receive the first inv for a given block.
However, we stress that both PDFs show a similar shape for the block propagation.

The results of our calibration on latency and throughput are depicted in Table~\ref{tab:latency} and Table~\ref{tab:throughput} respectively. The regions considered are \textbf{N}orth \textbf{A}merica, \textbf{Eu}rope, \textbf{S}outh \textbf{A}merica, \textbf{A}sia \textbf{S}outh, \textbf{A}sia \textbf{P}acific, \textbf{Jap}an, and \textbf{Aus}tria.
\begin{table}
	\centering
\begin{tabular}{c|ccccccc}
	\textbf{Latency [ms]} & NA & EU & SA & AS& AP & JAP & AUS\\
	\hline
NA & 24 & 70 & 79 & 115 & 140 & 79 & 122 \\
EU & 70 & 12 & 161 & 129 & 97 & 152 & 140 \\
SA & 79 & 128 & 30 & 195 & 207 & 176 & 201 \\
AP & 122 & 129 & 195 & 24 & 54 & 24 & 129 \\
AS& 122 & 97 & 207 & 54 & 36 & 42 & 85 \\
JAP & 103 & 152 & 176 & 48 & 42 & 18 & 183 \\
AUS & 134 & 158 & 201 & 128 & 85 & 192 & 30 \\
\end{tabular}

\caption{Latency between the origin region (rows) and the target region (columns) of 2 nodes, in milliseconds}
\label{tab:latency}
\end{table}

\begin{table}
	\centering
	\begin{tabular}{|l|lllllll|}
		\hline
		 Region: & NA & EU & SA & AS& AP & JAP & AUS\\
		\hline
		Upload [MB/s]: &6.8& 4.8 & 2.7& 4.8&	13.7& 6.8& 2.4\\
		\hline
		
	\end{tabular}
	\caption{Maximum upload throughput of each region in MB/s}
	\label{tab:throughput}
\end{table}

\section{Impact of Gossip Protocols}\label{ap:gossip_protcols}

We now evaluate in details the impact of the gossip layer on the measured delays. To do so, we consider a Bitcoin deployment and vary the gossip protocol to one of the following: \adv, \push, and \hybrid block propagation as well as \cbr in low-bandwidth mode.  In our implementation, we consider an adversary that can compromise nodes with probability $\EV$, such that on average $\EV \cdot \nodes$ are adversarially corrupted to execute network attacks. 
Each such malicious node delays messages to a fraction \ncs of its neighbors by artificially replying after \nct seconds. We vary each parameter as follow: $\EV \in \{15\%, 30\%\}$, $\ncs\in \{10\%,50\%\}$, and $\nct\in \{10m\}$, forming 4 distinct parameter combinations to test the impact of an attack on the network's delay and the resilience of the different gossip protocols to attacks.

\cref{fig:adv_0.15_0.1} shows that the \adv gossip protocol of Cardano is sensitive even to the weakest form (small values of $\EV$ and \ncs) of network-layer attacks, increasing the \maxdelay to 490s with $\nct=10$m for small values of \nodes. When the network size increases, the \maxdelay increases to 580s, which correspond to an increase of 20\% in latency.
The large \maxdelay measured even for small values of \nodes is mostly due to the fact that nodes that receive \inv messages from adversarial nodes will need to wait at least $min\{\nct,1200\}$ to either receive the block, or trigger a timeout and request the block from another node~\cite{DBLP:conf/ccs/GervaisRKC15}. 

Contrary to the \adv protocol that showcase high sensitivity to network-layer attacks, the \push and low-bandwidth \cbr mechanisms of Monero and Bitcoin seem to offer better robustness. This is due to the fact that nodes that are connected to at least one honest node will receive the block through a direct push path. In such a case, the network delay's impact is limited, hence the relatively small difference in performance, as depicted in \cref{fig:adv_0.15_0.1,fig:adv_0.3_0.1}, with the \maxdelay of 53s, much smaller than the delays incurred by the other two gossip protocols. 
Nonetheless, if some honest nodes are connected only to malicious nodes (cf. \cref{fig:adv_0.15_0.5,fig:adv_0.3_0.5}), the delays witnessed by both the \push and \cbr mechanism considerably increase up to 600s.

Finally, the \textit{hybrid push} protocol seems to establish a tradeoff between \adv and the direct push mode. 
While this gossip protocol achieves decent performance for most parameters when $n\leq10^3$, the performance degrades quickly when the network grows. In the case of $\EV=0.15$ and $\ncs=0.5$ shown in \cref{fig:adv_0.15_0.5} when $n=2000$, \maxdelay increases to 371s for $\nct=600$s, and when increasing \nodes to $10^5$, \maxdelay reaches 588s.

Overall, our findings in the adversarial setting confirm our intuition that \adv generally exhibits the highest \maxdelay, whereas \push and the low bandwidth mode of \cbr demonstrates a high level of consistency, with only a decline in performance in the worst cases ($\EV=0.3,~\ncs\geq0.5$). These findings suggest that \push and the low bandwidth mode of \cbr may be more reliable when taking network adversaries into account.

\newpage 
\section{Meta-Review}

The following meta-review was prepared by the program committee for the 2024
IEEE Symposium on Security and Privacy (S\&P) as part of the review process as
detailed in the call for papers.

\subsection{Summary}
This paper extends models of Nakamoto-style consensus to explicitly consider the likelihood of compromise of miners. The paper uses this more in-depth model to show that, contrary to prior results, smaller networks are not always more secure. Specifically, networks become more secure as they grow, until they become too large for their gossip protocol to transmit blocks fast enough, beyond which point their security begins to degrade. The paper investigates the relative impacts of gossip protocols of various practical deployments, and shows that the choice makes a big difference.

\subsection{Scientific Contributions}
\begin{itemize}
\item Independent Confirmation of Important Results with Limited Prior Research
\item Provides a Valuable Step Forward in an Established Field
\end{itemize}

\subsection{Reasons for Acceptance}
\begin{enumerate}
\item The paper combines analytical and empirical methods to better understand the impacts of sub-protocols (e.g., the gossip network) and network size on network security.
\item This paper provides a valuable step toward building more accurate models of the security of Nakamoto consensus.
\item A key takeaway from the paper is understanding the differences in behavior wrt latency and number of nodes for different protocols that all rely on Nakamoto consensus (or variants) but with potentially different P2P networks.
\item Much of the paper is well-written and easy to follow.
\end{enumerate}

\end{document}